\documentclass[sigconf]{acmart}
\settopmatter{printfolios=true} 
\settopmatter{printacmref=false}
\renewcommand\footnotetextcopyrightpermission[1]{} 
\pagestyle{plain} 
\usepackage{booktabs} 
\usepackage{xspace}
\usepackage{mathtools}
\usepackage{amsmath,amssymb}
\usepackage{enumitem}
\usepackage{verbatim}
\usepackage{ragged2e}
\usepackage{algorithm}
\usepackage{algorithmicx}
\usepackage{algcompatible}
\usepackage{algpseudocode}
\algrenewcommand\algorithmicindent{0.7em}
\usepackage[toc,page]{appendix}
\usepackage{graphicx} 
\usepackage{color}
\usepackage{caption}
\usepackage{ragged2e}
\usepackage{tabularx}
\usepackage{url}
\usepackage{hyperref}
\usepackage{cleveref}
\usepackage{pgfplotstable}
\pgfplotsset{compat=1.11,
        /pgfplots/ybar legend/.style={
        /pgfplots/legend image code/.code={%
        \draw[##1,/tikz/.cd,bar width=3pt,yshift=-0.2em,bar shift=0pt]
                plot coordinates {(0cm,0.8em)};},
},
}
\hypersetup{
     colorlinks   = true,
     citecolor    = black
}
\usepackage{amsmath}
\usepackage{mathtools}
\usepackage{amsmath,amssymb,latexsym} 
\usepackage{float}
\usepackage{breqn}
\usepackage{xcolor}
\usepackage{amsthm}
 \usepackage{relsize}
\usepackage[justification=centering]{caption}
\usepackage{caption}
\usepackage{subcaption}
\usepackage{pgf,tikz}
\usetikzlibrary{matrix}
\usepackage{pgfplots}
\pgfplotsset{width=7cm,compat=1.8}
\usetikzlibrary{arrows}
\usetikzlibrary{shadows}
\usepackage[utf8]{inputenc}
\usepackage[english]{babel}
\usepackage{xcolor, soul}
\usepackage{multicol}

\newcommand{\punt}[1]{}
\newcommand{\cmnt}[1]{}





\newcommand{\cgds} {concurrent graph data structure\xspace}

\newcommand{\lble} {linearizable\xspace}
\newcommand{\lbty} {linearizability\xspace}
\newcommand{\rbty} {reachability\xspace}






\newcounter{history}

\newcommand{\figref}[1]{Figure~\ref{fig:#1}}












\newcommand{\ignore}[1]{}

\algdef{SE}[DOWHILE]{Do}{doWhile}{\algorithmicdo}[1]{\algorithmicwhile\ #1}%
%



\newcommand{\mth} {operation\xspace}
\newcommand{\cc} {correctness-criterion\xspace}

\newcommand{\lp} {LP\xspace}

\newcommand{\tru} {\texttt{true}\xspace}
\newcommand{\fal} {\texttt{false}\xspace}
\newcommand{\nul} {\texttt{NULL}\xspace}

\newcommand{\vnodes} {{\tt VNodes}\xspace}

\newcommand{\enode}{{\tt ENode}\xspace}
\newcommand{\vnode}{{\tt VNode}\xspace}
\newcommand{\bfsnode}{{\tt BFSNode}\xspace}
\newcommand{\bfstree}{{\tt BFS-tree}\xspace}
\newcommand{\vlist} {vertex-list\xspace}
\newcommand{\elist} {edge-list\xspace}
\newcommand{\elists} {edge-lists\xspace}

\newcommand{\vh}{\texttt{vh}\xspace}
\newcommand{\vt}{\texttt{vt}\xspace}
\newcommand{\eh}{\texttt{eh}\xspace}
\newcommand{\et}{\texttt{et}\xspace}
\newcommand{\addv}{\textsc{AddVertex}\xspace}
\newcommand{\remv}{\textsc{RemoveVertex}\xspace}
\newcommand{\adde}{\textsc{AddEdge}\xspace}
\newcommand{\reme}{\textsc{RemoveEdge}\xspace}
\newcommand{\conv}{\textsc{ContainsVertex}\xspace}
\newcommand{\cone}{\textsc{ContainsEdge}\xspace}

\newcommand{\locvplus}{\textsc{locV}\xspace}
\newcommand{\loceplus}{\textsc{locE}\xspace}
\newcommand{\loccplus}{\textsc{locC}\xspace}
\newcommand{\concplus}{\textsc{ConCPlus}\xspace}

\newcommand{\createe} {\textsc{CEnode}\xspace}
\newcommand{\createv}{\textsc{CVnode}\xspace}
\newcommand{\convplus}{\textsc{ConVPlus}\xspace}

\newcommand{\fadd}{FetchAndAdd\xspace}

\newcommand{\getpath}{\textsc{GetPath}\xspace}

\newcommand{\treec}{\textsc{TreeCollect}\xspace}

\newcommand{\checkvisited}{\textsc{CheckVisited}\xspace}
\newcommand{\visitedarray}{\texttt{VisitedArray}\xspace}

\newcommand{\of}{obstruction-free\xspace}
\newcommand{\Nbk}{Non-blocking\xspace}
\newcommand{\nbk}{non-blocking\xspace}

\newcommand{\vcs}{vertices\xspace}

\newcommand{\cas}{compare-and-swap\xspace}
\newcommand{\CAS}{\texttt{CAS}\xspace}
\newcommand{\faa}{fetch-and-add\xspace}
\newcommand{\FAA}{\texttt{FAA}\xspace}

\newcommand{\acadde}{\textsc{AcyclicAddEdge\xspace}}

\newcommand{\accone}{\textsc{AcyclicContainsEdge\xspace}}

\newcommand{\lf}{lock-free\xspace}

\newcommand{\enext}{{\tt enxt}\xspace}
\newcommand{\vnext}{{\tt vnxt}\xspace}
\newcommand{\bnext}{{\tt nxt}\xspace}

\newcommand{\pointv}{{\tt ptv}\xspace}
\newcommand{\ecount}{{\tt ecnt}\xspace}
\newcommand{\lecount}{{\tt lecnt}\xspace}

\newcommand{\vntp}{{\tt VERTEX NOT PRESENT}\xspace}
\newcommand{\entp}{{\tt EDGE NOT PRESENT}\xspace}
\newcommand{\ventp}{{\tt VERTEX OR EDGE NOT PRESENT}\xspace}

\newcommand{\ep}{{\tt EDGE PRESENT}\xspace}
\newcommand{\eadd}{{\tt EDGE ADDED}\xspace}
\newcommand{\er}{{\tt EDGE REMOVED}\xspace}
\newcommand{\ef}{{\tt EDGE FOUND}\xspace}

\newcommand{\scan}{\textsc{Scan}\xspace}

\newcommand{\comparepath}{\textsc{ComparePath}\xspace}
\newcommand{\comparetree}{\textsc{CompareTree}\xspace}

\newcommand{\isMarked}{\textsc{isMrkd}\xspace}
\newcommand{\MarkedRef}{\textsc{MrkdRf}\xspace}
\newcommand{\unMarkedRef}{\textsc{UnMrkdRf}\xspace}
\definecolor{butter1}{rgb}{0.988,0.914,0.310}
\definecolor{chocolate1}{rgb}{0.914,0.725,0.431}
\definecolor{chameleon1}{rgb}{0.541,0.886,0.204}
\definecolor{skyblue1}{rgb}{0.447,0.624,0.812}
\definecolor{plum1}{rgb}{0.678,0.498,0.659}
\definecolor{scarletred1}{rgb}{0.937,0.161,0.161}

\setlength{\intextsep}{1pt plus 1.0pt minus 2.0pt}
\setlength{\textfloatsep}{1pt plus 1.0pt minus 2.0pt}
\setlength{\floatsep}{1pt plus 1.0pt minus 2.0pt}
\setlength{\dbltextfloatsep}{0pt plus 1.0pt minus 2.0pt}
\setlength{\dblfloatsep}{0pt plus 1.0pt minus 2.0pt}




\acmConference[ICDCN 2019]{ICDCN conference}{Bangalore}{India}
\acmYear{2019}
\copyrightyear{2019}



\begin{document}
\title{A Simple and Practical Concurrent \Nbk Unbounded Graph with Linearizable Reachability Queries}

\author{Bapi Chatterjee\texorpdfstring{{$^\ast$}}, Sathya Peri\texorpdfstring{{$^\dagger$}}, Muktikanta Sa\texorpdfstring{{$^\dagger$}}, Nandini Singhal\texorpdfstring{{$^\ddagger$}}\footnotemark{$^\nmid$}}
\affiliation{%
  \institution{{$^\ast$}IBM, India Research Lab, New Delhi, India, bapchatt@in.ibm.com}
  \institution{{$^\dagger$}Department of Computer Science \& Engineering, \\ Indian Institute of Technology Hyderabad, India, \{sathya\_p, cs15resch11012\}@iith.ac.in}
  \institution{{$^\ddagger$}Microsoft (R\&D) Pvt. Ltd., Bangalore, India, nasingha@microsoft.com}
}
\begin{abstract}
Graph algorithms applied in many applications, including social networks, communication networks, VLSI design, graphics, and several others, require dynamic modifications -- addition and removal of vertices and/or edges -- in the graph. This paper presents a novel concurrent non-blocking algorithm to implement a dynamic unbounded directed graph in a shared-memory machine. The addition and removal operations of vertices and edges are lock-free. For a finite sized graph, the lookup operations are wait-free. Most significant component of the presented algorithm is the \rbty query in a concurrent graph. The \rbty queries in our algorithm are obstruction-free and thus impose minimal additional synchronization cost over other operations. We prove that each of the data structure operations are linearizable. We extensively evaluate a sample C/C++ implementation of the algorithm through a number of micro-benchmarks. The experimental results show that the proposed algorithm scales well with the number of threads and on an average provides $5$ to $7$x performance improvement over a concurrent graph implementation using coarse-grained locking. 
\end{abstract}
\keywords{ concurrent data structure, linearizability, linearization points, lock-free, wait-free, directed graph, reachable path}


\maketitle
\footnotetext{$^\nmid$Work done while a student at IIT Hyderabad.}

\section{Introduction}\subsection{Background}
A \textit{graph} is a highly useful data structure that models the pairwise relationships among real-world objects. Formally, it is represented as an ordered pair $G = (V,E)$, where $V$ is the set of \vcs and $E$ is the set of edges. They underlay a number of important applications such as various kinds of networks (social, semantic, genomics, etc.), VLSI design, graphics, and several others. Generally, these applications require modifications such as insertion and deletion of \vcs and/or edges in the graph to make them \textit{dynamic}. Furthermore, they also require the data structure to grow at the run-time. The rise of multi-core systems has necessitated efficient and correct design of concurrent data structures, which can take advantage of the multi-threaded implementations. Naturally, a large number of applications would significantly benefit from concurrent dynamic unbounded graphs implemented on ubiquitous multi-core computers.


Arguably, the most important application of such a graph is performing a \textit{\rbty query}: for a given pair of vertices $u, v \in V$, determine if a sequence of adjacent vertices, i.e. a path, exists in $V$ that starts at $u$ and ends at $v$. In many use-cases, a reachability query requires returning the path if it exists. Performing a \rbty query effectively entails exploring all the possible paths starting at $u$. In a dynamic and concurrent setting, where both $V$ and $E$ can be modified by a concurrent thread, it is extremely challenging to return a \textit{valid} path, or for that matter being assured that there does not exist a path at the return of a \rbty query. Besides, deletion of a vertex $v \in V$ involves deleting it along with all its incoming and outgoing edges in $E$. Obviously, a look-up or an insertion operation with a concurrent deletion poses complex design issues with regards to their correctness.

A well-accepted \cc for concurrent data structures is \emph{\lbty} \cite{HerlWing:1990:TPLS}. Broadly, a provably linearizable operation is perceived by a user as if it takes effect instantaneously at a point between its invocation and response during any concurrent execution. A simple and popular approach to handle updates in a concurrent data structure, while ensuring \lbty, is by way of mutual exclusion using \textit{locks}. In case of a \rbty query, that would essentially amount to locking the entire graph at its invocation and releasing the lock only at the return. However, in an asynchronous shared-memory system, such an implementation is vulnerable to arbitrary delays due to locks acquired by a slow thread in addition to several other pitfalls such as deadlock, priority inversion and convoying. 

In contrast, the \textit{non-blocking} implementations -- \textit{wait-free}, \textit{lock-free}, and \textit{obstruction-free} \cite{MauriceNir, Herlihy} -- do not use locks and therefore are free from the drawbacks mentioned above. A wait-free operation in a concurrent data structure always finishes in a finite number of steps taken by a non-faulty thread. However, such a strong progress guarantee is too costly to implement.  
Alternatively, a lock-free operation is guaranteed to finish in a finite number of steps taken by some non-faulty thread. A concurrent data structure with all its operations satisfying lock-freedom is generally scalable. However, ensuring that each of the operations in a complex data structure, such as a graph, finishes in a finite number of steps by some thread, is still extremely challenging. Though weaker than the lock-freedom, the obstruction-freedom is still good enough to avoid the pitfalls of locks: an obstruction-free operation always finishes in a finite number of steps taken by any non-faulty thread running in isolation. Importantly, for obstruction-free operations, the design complexity as well as the synchronization overhead are much lower in comparison to their lock-free counterparts.

\subsection{Our contributions}
In this paper, we present a novel \nbk algorithm for an unbounded directed\footnote{An undirected graph can be directly extended from a directed graph.} graph data structure. The contributions of this work are summarized below:

\begin{enumerate}
	\item For a directed graph $G = (V,E)$, we describe an Abstract Data Type (ADT) comprising of \textit{modifications} --\textit{addition}, \textit{removal} -- and \textit{lookup} operations on the sets $V$ and $E$. The ADT also comprises an operation to perform the \textit{\rbty queries} in $G$. The graph is represented as an adjacency list which enables it to grow without bound (up to the availability of memory) and sink at the runtime. (\Cref{sec:model})

	\item We describe an efficient concurrent \nbk implementation of the ADT (\Cref*{sec:ds-design}). To our knowledge, this is the first work on a \nbk unbounded graph. The spotlight of our work is an \of \rbty query. (\Cref{sec:getPath})
	
	\item We prove the correctness in terms of the linearizability of the ADT operations. We also prove the non-blocking progress guarantee: (a) the modifications and lookup on vertices and edges of the graph are \lf, (b) the \rbty queries are \of, and (c) if the graph size is finite, the vertex and edge lookup operations are wait-free.(\Cref{sec:proof})
	
	\item For an experimental validation, we implemented the algorithm in C/C++. For comparison, we also implemented a sequential and a coarse-grained lock-based concurrent graph. We tested the implementations using a number of micro-benchmarks simulating varying workloads. The experiments demonstrate that the lock-free algorithm is highly scalable with the number of threads. We observed up to $5 - 7$x higher throughput utilizing the available threads in our multi-core workstation while comparing the \nbk algorithm with its sequential and coarse-grained lock-based counterparts. (\Cref{sec:results}) 
\end{enumerate}

\subsection{Related work} 
The concurrent graph data structure is largely an unexplored topic. There has been a recent interesting and relevant work by Kallimanis and Kanellou \cite{Kallimanis}. They proposed a concurrent graph that supports wait-free edge modifications and traversals. Their algorithm works on an adjacency matrix representation. They do not allow addition or removal of vertices after initialization of the graph, which renders it unsuitable for many applications that require dynamic and unbounded graphs. Moreover, it is not clear how the wait-free traversal will return a set of adjacent vertices in their algorithm.

\subsection{A brief overview of the design}
We implement the adjacency list of a directed graph as a sorted linked-list of \textit{vertex-nodes}, where each of the vertex-node roots a sorted linked-list of \textit{edge-nodes}, see \Cref{fig:conGraph}. The edge-nodes maintain pointers to the corresponding vertex-nodes to enable efficient graph traversals. The individual edge-node-lists and the vertex-node-list are lock-free with regards to the modifications and lookup operations.

The lock-free operations in the graph intuitively appear as a composition of the lock-free operations in the sorted vertex-list and the edge-lists. However, it is well-known that the lock-freedom is not composable \cite{dang2011progress}, the progress guarantee of our algorithm is proved independent of the lock-free property of the component linked-lists. Furthermore, we propose some elegant optimizations in the operations' synchronization that not only ensure provable \lbty but also bring simplicity in the design.

For \rbty queries we perform breadth first search (BFS) traversals in the graph. We implement the BFS traversals fully non-recursively for efficiency in a concurrent setting. It is natural that a \rbty query is much more costlier compared to a modification or a lookup operation. However, in a concurrent setting it needs to synchronize with other concurrent operations. To ensure that the overall performance does not suffer from large \rbty queries, we do not employ other operations to help them. Instead, to achieve the \lbty, we repeatedly collect concise versions of the graph and validate them by matching the return of two consecutive collections. Our approach is essentially based on the \emph{double collect} \cite[Chap 4]{MauriceNir} aided with several interesting optimizations. This design choice results in \of progress guarantee for the \rbty queries.

\label{sec:intro}

\section{Preliminaries}\label{sec:model}
\subsection{The ADT}

An abstract directed graph is given as $G = (V,E)$, where $V$ is the set of \textit{\vcs} and $E$ is the set of directed \textit{edges} (ordered pair of \vcs). Each edge connects an ordered pair of \vcs belonging to $V$. A $v \in V$ maintains an immutable unique key $k \in K$, where $K$ is a totally ordered set. A vertex $v \in V$ with key $k$ is denoted by $v(k)$. We use $e(k,l)$ to denote an edge $(v(k),v(l)) \in E$.


We define an ADT for operations on $G$ as given below.  
\begin{enumerate}
	\item The $\addv(k)$ operation adds a vertex $v(k)$ to $V$,  if $v(k) \notin V$ and returns \tru. If $v(k) \in V$, it returns \fal. 
	\item The $\remv(k)$ operation removes $v(k)$ from $V$, if $v(k) \in V$ and returns \tru. If $v(k) \notin V$, it returns \fal. A successful $\remv(k)$ ensures that all $(j,k), (k,l) \in E$ are removed as well.
	\item The $\conv(k)$ returns \tru, if $v(k) \in V$; otherwise, it returns \fal.
	\item The $\adde(k,l)$ operation adds an edge $e(k,l)$ to $E$, if (a) $e(k,l) \notin E$, (b) $v(k) \in V$, and (c) $v(l) \in V$. If either of the conditions (a), (b) or (c) not satisfied, no change is made in $E$. For clarity about the reason of failure in adding an edge, we use different \textit{indicative strings} for the different return cases.
	\item The $\reme(k,l)$ operation removes the edge $e(k,l)$ if $e(k,l) \in E$. If $e(k,l) \notin E$, it makes no change in $E$. Similar to \adde, a \reme returns strings indicating if $v(k) \notin V$ or $v(l) \notin V$ or $e(k,l) \notin E$.
	\item The $\cone(k,l)$ operation returns an indicative string "\ep" if $e(k,l) \in E$; otherwise, it returns similar strings as a \reme.
	\item The $\getpath(k,l)$ operation returns a sequence of vertices -- called a \textit{path} -- $\{v_i\}_{i=1}^p \subseteq V$, if (a) $v(k) \in V$, (b) $v(l) \in V$ (c) $(v(k),v_1) \in E$, (c) $(v_p,v(l)) \in E$, and (d) $(v_i,v_{i+1}) \in E~ \forall 1 \leq i \leq p$; otherwise, it returns \nul.
\end{enumerate}

\subsection{The data structure}\label{subsec:ds}
\begin{figure}[H]
	\captionsetup{font=footnotesize}
	\centerline{\scalebox{0.5}{\input{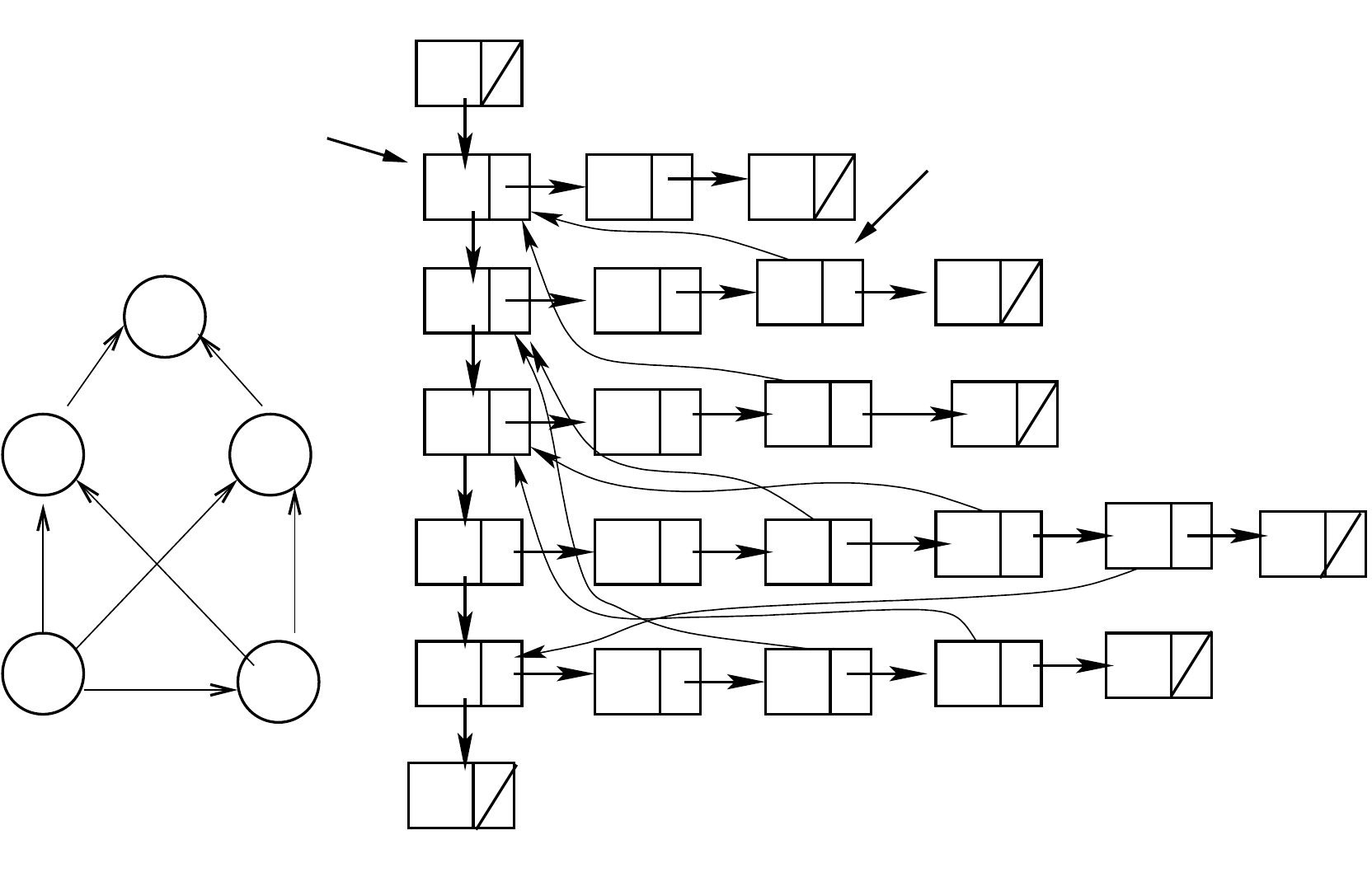_t}}}
	\caption{(a) A directed Graph (b) The \cgds for (a).}
	\label{fig:conGraph}
\end{figure}
The above ADT is implemented by a data structure based on dynamically allocated \textit{nodes}. Nodes are connected by  word-sized \textit{pointers}. The data structure essentially implements the \textit{adjacency list} of $G$. See \Cref{fig:conGraph}. In our design an adjacency list is a composite linked-list of linked-lists. We call the composite linked-list the \textit{\vlist}. Each node in the \vlist, called a \textit{\vnode}, corresponds to a $v(k) \in V$ and roots a linked-list called an \textit{\elist}. An \elist rooted at a \vnode corresponding to $v(k) \in V$ consists of nodes, called \textit{\enode}, that correspond to the directed edges $e(k,l) \in E$. 

The structures of the \vnode and \enode are given in \Cref{fig:struct-evnode}. A \vnode $v(k)$ consists of two pointers \vnext and \enext in addition to an immutable key $k$. \vnext points to the next \vnode in the \vlist, whereas, \enext points to the head of the \elist rooted at $v(k)$. It also contains an array \visitedarray and an atomic counter \ecount. The array and the counter are used to facilitate BFS traversals, which we describe later. In the \elist of $v(k)$, an \enode $e(k,l)$ contains two pointers \pointv and \enext in addition to an immutable key field $l$. We do not need to record the key of $v(k)$ in an \enode $e(k,l)$ because the \enode{s} are confined to the context of a single \elist. \pointv of $e(k,l)$ points to the \vnode $v(l)$, whereas, its \enext points to the next \enode in the \elist.
\begin{figure}[h]
	\captionsetup{font=footnotesize}
	\begin{subfigure}{.33\textwidth}	
	\begin{footnotesize}
		\begin{tabbing}
			\hspace{0.1in} \= \hspace{0.1in} \= \hspace{0.1in} \=  \hspace{0.1in} \= \\
			\> {\bf class \vnode \{} \\
			\> \> \texttt{int}  $k$; 
			\\
			\> \> {\vnode~ \vnext;} 
			\\
			\> \> {\enode~ \enext;} 
			 \\
			\> \> {int \visitedarray [];} 
			\\
			\> \> {int \ecount;} 
			\\
			\> \} 
			\end{tabbing}
					\vspace{-0.3in}
				\end{footnotesize}
			\end{subfigure}
			\begin{subfigure}{.32\textwidth}	
			\begin{footnotesize}
					\begin{tabbing}
						\hspace{0.1in} \= \hspace{0.1in} \= \hspace{0.1in} \=  \hspace{0.1in} \= \\
			\\
			\> {\bf class \enode \{} \\
			\> \> \texttt{int}  $l$;   
			\\
			\> \> {\vnode~ \pointv;} 
			\\
			\> \> {\enode~ \enext;} 
			\\
			\> \}\\
		\end{tabbing}
		\vspace{-0.3in}
	\end{footnotesize}
	\end{subfigure}
		\begin{subfigure}{.32\textwidth}	
				\begin{footnotesize}
						\begin{tabbing}
					\hspace{0.1in} \= \hspace{0.1in} \= \hspace{0.1in} \=  \hspace{0.1in} \= \\
							\> {\bf class \bfsnode \{} \\
							\> \> {\tt \vnode $n$;} 
							\\
							\> \> {int \lecount;} 
			
							\\
							\> \> {\tt \bfsnode~ \bnext;} 
							\\
							\> \> {\tt \bfsnode $p$;} 
							
							\\
							\> \} \\
                            \\
			\end{tabbing}
			\vspace{-0.3in}
		\end{footnotesize}
		\end{subfigure}
	\caption{Structure of \enode, \vnode and \bfsnode.}
	\label{fig:struct-evnode}
\end{figure}

To avoid dereferencing the null pointers, we initialize the \vlist with sentinel nodes $v(-\infty)$ and $v(\infty)$ representing dummy head and tail, respectively. Similarly, each of the \elists are initialized with $e(-\infty)$ and $e(\infty)$ representing its dummy head and tail, respectively. 

The data structure maintains simple \textit{invariants}: (a) the \vlist is sorted, and (b) each of the \elists are sorted. 

The implementation of the ADT operations in the data structure as described here is called their \textit{sequential specification}. Traversals in the \vlist and \elists are performed following their sorted order starting from the dummy head and stopping at an appropriate node. $\addv(k)$, $\remv(k)$ and $\conv(k)$ locate $v(k)$ traversing down the \vlist starting from $v(-\infty)$ and stopping at a node $v(j)$ such that $j \leq k$ and $\nexists ~v(j') \in V$ where $j < j' \leq k$. The pointer modifications by \addv and \remv are exactly same as in a sequential linked-list data structure \cite{Harrisdisc01}. \conv returns \tru if $j = k$ else it returns \fal.

The $\adde(k,l)$, $\reme(k,l)$ and $\cone(k,l)$ first locate both $v(k)$ and $v(l)$ in the \vlist. Only if both the nodes are located in the \vlist, the next step is taken. In the next step they traverse down the \elist of $v(k)$. Addition, removal and lookup in an \elist are similar to those in the \vlist. Interestingly, when the \vnode $v(l)$ is removed from the \vlist, we may not necessarily have to remove $e(k,l)$ from the  \elist of $v(k)$, and still a $\cone(k,l)$ will work correctly because it would not traverse down the \elist of $v(k)$ after finding that $v(l)$ does not exist in the \vlist. The return of these operations are indicative strings as mentioned in the ADT definition.

A $\getpath(k,l)$ operation first locates $v(k)$ and $v(l)$ in the \vlist. If either of them is not located, it right away returns \nul. On locating $v(k)$, it performs a BFS traversal in the data structure starting at $v(k)$ \cite{cormen2009introduction}. The pointers \pointv of \enode{s} help a traversal to directly go to a \vnode from an \enode. An integer (for a dirty bit as described in \cite{cormen2009introduction}) is used to keep track of the visited nodes during BFS traversals. The array \visitedarray in a \vnode serves this purpose in concurrent multi-threaded implementations. A BFS traversal for $\getpath(k,l)$ returns as soon as the node $v(l)$ is located. 

The return of a BFS is a linked-list of \bfsnode{s}, see \Cref{fig:struct-evnode}. A \bfsnode $b(v)$, corresponding to a visited \vnode $v$, contains $v$. A counter \lecount, to make a local copy of the vertex \ecount.  It also contains two \bfsnode pointers \bnext and $p$. The pointer \bnext points to the next node in the returned linked-list. $p$ is a special pointer pointing to the \bfsnode containing the \vnode in the graph from which the BFS traverses to $v$. $p$ facilitates the tracing of the path from $v(k)$ to $v(l)$ in the return of $\getpath(k,l)$. Starting from the last node in the returned list, which necessarily corresponds to $v(l)$ if there exists a path between $v(k)$ and $v(l)$, $\getpath(k,l)$ returns the sequence of the adjacent nodes. If the last \bfsnode in the returned list does not contain $v(l)$, the return of $\getpath(k,l)$ is \nul.

\section{The lock-free algorithm}\label{sec:ds-design}
\subsection{The shared-memory system}
For a concurrent non-blocking implementation of the data structure, we consider a \textit{shared-memory system} consisting of a finite set of \textit{processors} accessed by a finite set of \textit{threads} that run in a completely asynchronous manner. The threads communicate with each other by invoking \mth{s} on shared objects and getting corresponding responses. The pointers and other fields of the various nodes are implemented by the shared objects. The system supports atomic \texttt{read}, \texttt{write}, \texttt{\faa} (\FAA)  and \texttt{\cas} (\CAS) instructions. 
\begin{figure*}[!htp]
\captionsetup{font=footnotesize}
	\begin{subfigure}{.3\textwidth}	
\begin{algorithmic}[1]
\renewcommand{\algorithmicprocedure}{\textbf{Operation}}	
\small
	\Procedure{ \addv($k$)}{}\label{advstart}
	\While{(\tru)}
	\State{$\langle pv, cv \rangle$ $\gets$ \locvplus($\vh$, $k$);}
	\If{($cv.k$ $=$ $k$)} {return {\fal;}}
	\Else
	\State{$nv$ $\gets$ \createv($k$); $nv.\vnext$ $\gets$ $cv$;}
	\If{(CAS($pv.\vnext$, $cv, nv$))} \label{lin:cas-addv}
	\State return {\tru;} 
	\EndIf
	\EndIf
	\EndWhile
	\EndProcedure\label{advend}
	\algstore{addv}
\end{algorithmic}	
    \hrule
    \begin{algorithmic}[1]
\renewcommand{\algorithmicprocedure}{\textbf{Operation}}	
	\algrestore{addv}
	\small
	\Procedure{ $\remv$($k$)}{}\label{remvstart}
	\While{(\tru)}
	\State{$\langle pv, cv \rangle$ $\gets$ \locvplus($\vh$, $k$);}
	\If{($cv.k$ $\neq$ $k$)}
	\State return {\fal;}
	\EndIf
	\State{$cn$ $\gets$ $cv.\vnext$;}
	\If{($\neg$ \isMarked($cn$))}
	\If{(CAS($cv.\vnext$, $cn$, \MarkedRef($cn$)))} \label{lin:cas-remv} 
	\If{(CAS($pv.\vnext, cv, cn$))} \label{lin:cas-remv-phy}
	\State{ $break$;} 
	\EndIf
	\EndIf
	\EndIf
	\EndWhile
	\State return {\tru;}
	\EndProcedure\label{remvend}
	\algstore{remv}
\end{algorithmic}
\hrule
	\begin{algorithmic}[1]
\renewcommand{\algorithmicprocedure}{\textbf{Operation}}	
	\algrestore{remv}
	\small
	\Procedure{ $\conv$($k$)}{}\label{convstart}
	\State {$cv$ $\gets$ $\vh.\vnext$;}
	\While{($cv.k$ $<$ $k$)}
	\State{$cv$ $\gets$ \unMarkedRef($cv.\vnext$);}
	\EndWhile
	\If{($cv.k$ $=$ $k$ $\bigwedge$ $\neg$ \isMarked($cv$))} 
	\State return  {\tru;}\label{lin:conv-check}
	\Else {\hspace{2mm}return  {\fal;}}
	\EndIf
	\EndProcedure\label{convend}
	\algstore{conv}
\end{algorithmic}

	\end{subfigure}
    \begin{subfigure}{.3\textwidth}
	\begin{algorithmic}[1]
	\algrestore{conv}
\renewcommand{\algorithmicprocedure}{\textbf{Operation}}	
	\small
		\Procedure{ $\adde$($k$, $l$)}{}\label{addestart}
		\State {$\langle$ u, v, st $\rangle$ $\gets$ $\convplus$($k, l);$}  \label{lin:adde-convplus}
		 \If{(st = \fal)}
        \State {return ``\vntp'' ; }
        \EndIf
        \While{(\tru)}
	 \If{(\isMarked($u$) $\bigvee$  \isMarked($v$))}	\label{lin:adde-check-uv}
        \State return ``\vntp''; 
        \EndIf
		\State{$\langle pe, ce \rangle$ $\gets$ \loceplus ($u.\enext$, $l$);} \label{lin:adde-loceplus}
		\If{($ce.l$ $=$ $l$)}
		\State return ``\ep''; 
		\EndIf

        \State{$ne$ $\gets$ \createe($l$);}
		\State{$ne.\enext$ $\gets$ $ce$;}
		\State{$ne.\pointv$ $\gets$ $v$;}
		\If{(CAS($pe.\enext$, $ce, ne   $))}\label{lin:cas-adde}
         \State{u.\ecount.\fadd(1);}\label{faa:ade}
		\State return ``\eadd'';
		\EndIf
		
		\EndWhile
        \EndProcedure\label{addeend}
		\algstore{adde}
\end{algorithmic}
\hrule
\begin{algorithmic}[1]
	\algrestore{adde}
	\small
\renewcommand{\algorithmicprocedure}{\textbf{Operation}}	
	\Procedure{ $\cone$($k, l$)}{}\label{conestart}
    	\State {$\langle$ u, v, st $\rangle$ $\gets$ $\concplus$($k, l);$}  \label{lin:convplus-uv}
		 \If{(st = \fal)}
        \State {return ``\vntp''; }
        \EndIf
        \State {$ce$ $\gets$ $u.\enext$;}
		\While{($ce.l$ $<$ $l$)}
		\State{$ce$ $\gets$ \unMarkedRef($ce.\enext$);}
	    \EndWhile
	    \If{($ce.l$ $=$ $l$ $\bigwedge$ $\neg$ \isMarked($u$) $\bigwedge$ $\neg$ \isMarked($v$) $\bigwedge$ $\neg$ \isMarked($ce$));} \label{lin:cone-check}
        \State {return ``\ef'' ; }
        \Else
        \State {return ``\ventp''; }
        \EndIf
	    \EndProcedure\label{coneend}
		\algstore{cone}
\end{algorithmic}
	\end{subfigure}
	 \begin{subfigure}{.352\textwidth}
	\begin{algorithmic}[1]
	\algrestore{cone}
	\small
\renewcommand{\algorithmicprocedure}{\textbf{Operation}}	
		\Procedure{ $\reme$($k$, $l$)}{}\label{remestart}
    		\State {$\langle$ u, v, st $\rangle$ $\gets$ $\convplus$($k, l$);}  \label{lin:reme-validate-u}
		 \If {(st = \fal)}
	    \State {return ``\vntp''; }
       \EndIf
       	\While{(\tru)}
       	        \If{(\isMarked($u$) $\bigvee$  \isMarked($v$))}	\label{lin:reme-check-uv}
        \State return ``\vntp''; 
        \EndIf
		\State{$\langle pe, ce \rangle$ $\gets$ \loceplus ($u.\enext$, $l$);} \label{lin:reme-loceplus}
		\If{($ce.l$ $\neq$ $l$)}
		\State return ``\entp''; 
		\EndIf
		\State{$cnt$ $\gets$ $ce.\enext$;}
		\If{($\neg$ \isMarked($cnt$))} 
		\If{(CAS($ce.\enext$, $cnt$, \MarkedRef($cnt$)))} \label{lin:cas-reme}
		\State{$u.\ecount.\fadd(1)$;} \label{increm}
		 \If{(CAS($pe.\enext, ce, cnt$))}  { $break$;}\label{lin:cas-reme-phy}
                \EndIf
		\EndIf
		\EndIf
		\EndWhile
		\State return ``\er''; 
	    \EndProcedure\label{remeend}
        \algstore{reme}
\end{algorithmic}
\hrule
\begin{algorithmic}[1]
	\algrestore{reme}
	\small
		\Procedure{$\locvplus$($v$, $k$)}{}\label{locvstart}
		\While{($\tru$)} \label{lin:search-again}
		\State {$pv \gets v$; $cv \gets pv.\vnext$;} \label{lin:locv3-w}
		\While{$(\tru)$}
		\State{$cn \gets cv.\vnext$;}
	    \While{(\isMarked($cn)) \bigwedge (cv.k < k))$} \label{lin:locv5-w} 
		\If{($\neg CAS( pv.\vnext, cv, cv.\vnext)$)}
		\State goto \ref{lin:search-again}; 
		\EndIf
        \State {$cv \gets cn$; $cn \gets \unMarkedRef(cv.\vnext)$;} 
        \EndWhile
        \If{($cv.k \geq k$)} {return $\langle pv, cv \rangle$;}
        \EndIf
        \State {$pv \gets cv$; $cv \gets cn$;} 
		\EndWhile
        \EndWhile
		\EndProcedure\label{locvend}
		\algstore{locvplus}
\end{algorithmic}
	\end{subfigure}
    \vspace{-2mm}
	\caption{Pseudo-codes of \addv, \remv, \conv, \adde, \reme, \cone and \locvplus.}\label{fig:v-methods}
\end{figure*}

A \FAA${(address, val)}$ atomically increments the value at the memory location ${address}$ by $val$. A \texttt{CAS}${(address, old, new)}$ instruction checks if the current value at a memory location ${address}$ is equivalent to the given value ${old}$, and only if true, changes the value of ${address}$ to the new value ${new}$ and returns \tru; otherwise the memory location remains unchanged and the instruction returns \fal. Such a system can be perfectly realized by a Non-Uniform Memory Access (NUMA) computer with one or more multi-processor CPUs.

\subsection{The design basics}
The basic structure of the presented graph data structure is based on a linked-list. Therefore, for the lock-free synchronization in the graph, we utilize the approach of an existing lock-free linked-list algorithm \cite{Harrisdisc01}. The core idea of the design is a remove procedure  based on a protocol of first atomically injecting an \textit{operation descriptor} on the outgoing pointers of the \vnode{s} or \enode{s}, which are to be removed, and then atomically modifying the incoming pointers to disconnect the nodes from the \vlist or \elists. If multiple concurrent operations try to modify a pointer simultaneously, they synchronize by helping the pending removal operation that would have successfully injected its descriptor.

More specifically, to remove a \vnode (respectively \enode) $n$ from the \vlist (respectively an \elist), we use a \CAS to inject an operation descriptor at the pointer \vnext (respectively \enext). We call these descriptors a \textit{mark} and a pointer with a descriptor as \textit{marked}. We call a \vnode (respectively \enode) \textit{marked} if it \vnext (respectively \enext) pointer is marked. A pointer once marked is never modified again. 

A concurrent operation, if obstructed at a marked pointer, helps by performing the remaining step of a removal: modifying the incoming pointer from the previous node to point the next node in the list and thereby removing the node. An addition operation uses a single \CAS to update the target pointer only if it is not marked, called \textit{clean}, otherwise it helps the pending removal operation. During lookup or a \rbty query, a traversal on the \vlist or the \elist{s} does not perform any help. Traversals for modification operations help pending removal operations. After helping a pending removal operation, a concurrent addition or removal operation restart suitably.

To realize the atomic step to inject an operation descriptor, we replace a pointer using a \CAS with a single-word-sized packet of itself and an operation descriptor. In C/C++, to pack the operation descriptor with a pointer in a single memory-word, we apply the so-called \textit{bit-stealing}. In a x86/64 machine, where memory allocation is aligned on a 64-bit boundary, three least significant bits in a pointer are unused. The mark descriptor uses the last significant bit: if the bit is set the pointer is marked, otherwise clean.

For ease of exposition, we assume that a memory allocator always allocates a variable at a new address and thus an ABA problem does not occur. ABA is an acronym to indicate a typical problem in a CAS-based lock-free algorithm: a value at a shared variable can change from A to B and then back to A, which can corrupt the semantics of the algorithm. We assume the availability of a lock-free memory reclamation scheme.



\textbf{Pseudo-code convention:} The algorithm is presented in the pseudo-codes in the \Cref{fig:v-methods,fig:e2-methods,fig:getpath-methods}. If $x$ is a pointer pointing to a class instance, we use $x.y$ to indicate the field $y$ of the instance of the class. $<r_1, r_2,\ldots,r_n>$  indicates a return of multiple variables together. For a pointer $x$, $\MarkedRef(x)$ and $\unMarkedRef(x)$ denote $x$ with its last significant bit set to 1 and 0, respectively. $\isMarked(x)$ returns \tru if the last significant bit of $x$ is set to 1, otherwise, it returns \fal. A call of $\createv(k)$ instantiates a new \vnode with key $k$, whereas, $\createe(l)$ instantiates a new \enode with key $l$. For a newly instantiated \vnode or \enode, the pointer fields are \nul, integer array has 0 in each slot and an integer counter is 0.
\subsection{The \lf vertex operations}
\label{sec:working-con-graph-methods}
The operations \addv, \remv and \conv are shown in the \Cref{fig:v-methods}. Fundamentally, these operations in our algorithm are similar to the ones in \cite{Harrisdisc01}. However, unlike \cite{Harrisdisc01}, \conv does not indulge in helping during traversal. This essentially makes the \conv operations wait-free in case the set of keys is finite. For the sake of efficiency. We believe that \conv operation will be lot more frequent than update operations. Hence, we are not making the \conv operation help the update operations. Inspite of this, the algorithms are still lock-free. The return of the operations are as described in their respective sequential specifications presented in the \Cref{subsec:ds}.

An \addv($k$) operation, in the \cref{advstart} to \ref{advend}, first calls \locvplus procedure to locate the appropriate \vnode, ahead of which it needs to add the new \vnode. On locating a clean \vnext pointer of a \vnode with the key just less than $k$, it attempts a \CAS to add the new \vnode. On a \CAS failure, the process is reattempted. A \remv($k$), \cref{remvstart} to \ref{remvend}, similarly traverses down the \vlist by calling \locvplus. On locating the \vnode $v(k)$ to remove, it (a) marks the \vnext of $v(k)$ using a \CAS, and (b) atomically updates the \vnext of the previous node in the \vlist to point to the next \vnode after $v(k)$ using a \CAS. On any \CAS failure the process is reattempted.

During traversal in a \locvplus, \cref{locvstart} to \ref{locvend}, we help a pending \remv by essentially completing the step (b) as described above. 

A successful \CAS at \cref{lin:cas-remv} is called the \textit{logical removal} of $v(k)$. After this step any call of \isMarked($v(k).\vnext$) would return \fal, which is used by a \conv, \cref{convstart} to \cref{convend}, to return \fal in case a marked $v(k)$ is located. 

The removal of a vertex from a graph also requires removing all the incoming and outgoing edges of it. The outgoing edges are removed along with the \vnode $v(k)$, as it is logically, and eventually \textit{physically} detached from the \vlist by a successful \CAS at \cref{lin:cas-remv-phy}. However, the incoming edges are logically removed as any \pointv from an \enode of any \elist would call \isMarked to check the removed \vnode. As an optimization, we leave the \enode{s} in all the \elists with their \pointv pointing to $v(k)$ as they were. Eventually those \enode{s} are removed as part of the physical removal of the roots of their respective \elists. Note that, the physical removal can be performed by any helping operation. 

\subsection{The \lf edge operations}
 An $\adde(k, l)$ operation, \cref{addestart} to \ref{addeend}, starts by verifying the presence of vertices $v(k)$ and $v(l)$ in the \vlist of the graph by invoking the \convplus at the \cref{lin:adde-convplus}. The procedure \convplus, \cref{convpstart} to \cref{convpend}, essentially locates the \vnode with the smaller key first starting from $v(-\infty)$ and if located, starts from that \vnode to locate the \vnode with the bigger key. If any of the nodes not located $\adde(k, l)$ returns the string \vntp. Once both the nodes located, it adds a new \enode $e(k, l)$ in the \elist of $v(k)$ along the same lines of addition of a \vnode in the \vlist. To traverse down an \elist, the procedure \loceplus is called, see \cref{locestart} to \ref{loceend} in the \Cref{fig:e2-methods}. 
 
 A \loceplus traverses down an \elist and physically removes two kind of logically removed \enode{s}: (a) the ones corresponding to a logically removed \vnode, see the \cref{lin:loce-log-mark,lin:loce-phy}, and (b) logically removed \enode{s}, see the \cref{remlogd}. It returns the address of two consecutive \enode{s} between which the new \enode could be added.
 \begin{figure*}[!htp]
\captionsetup{font=footnotesize}
	\begin{subfigure}{.4\textwidth}	
\begin{algorithmic}[1]
	\algrestore{locvplus}
	\small
		\Procedure{$\loceplus$($v$, $k$)}{}\label{locestart}
		\While{(\tru)}  \label{lin:retry-locte}
		\State {$pe \gets v$; $ce \gets pe.\enext$;} 
        \While{(\tru)}
        \State{$cnt \gets ce.\enext$; \vnode  vn $\gets$ ce.\pointv;} 
        \While{(\isMarked($vn$) $\bigwedge$ $\neg$ \isMarked($cnt$))} \label{lin:retry-locte2}
		\If{($\neg$CAS($ce.\enext$, $cnt$, \MarkedRef($cnt$)))} \label{lin:loce-log-mark}
		\State{goto Line \ref{lin:retry-locte};}
		\EndIf
		\If{($\neg$CAS($pe.\enext, ce, cnt$))}\label{lin:loce-phy} {goto Line \ref{lin:retry-locte};}  
		\EndIf
        \State {$ce \gets cnt$; $ vn \gets ce.\pointv;$} 
        \State {$cnt \gets \unMarkedRef(ce.\enext)$;}
		\EndWhile
        \While{(\isMarked(cnt))}
		\State{v.\ecount.\fadd(1);}\label{faa:loc}
		\If{($\neg$ CAS($pe.\enext, ce, cnt))$;}\label{remlogd} {goto \ref{lin:retry-locte}; }
		\EndIf
        \State {$ce \gets cnt$;  $ vn \gets ce.\pointv;$} 
        \State {$cnt \gets \unMarkedRef(ce.\enext)$;}
        \EndWhile
        \If{(\isMarked($vn$))} { goto Line \ref{lin:retry-locte2};}
        \EndIf
        \If{($ce.l \geq k$)} {return $\langle pe, ce \rangle$} 
        
        \EndIf
        \State {$pe \gets ce$; $ce \gets cnt$;} 
		\EndWhile
		\EndWhile
        \EndProcedure\label{loceend}
		\algstore{locte}
\end{algorithmic}
	\end{subfigure}
    \begin{subfigure}{.32\textwidth}
\begin{algorithmic}[1]
	\algrestore{locte}
	\small
		\Procedure{$\convplus$ ($k$, $l$)}{}\label{convpstart}
		\If{($k < l$)}
		\State {$\langle pv1, cv1\rangle$ $\gets$ $\locvplus$($\vh$, $k$);} \label{lin:locvplus-k}
		\If{($cv1.k$ $\neq$ $k$)}
	     \State{return $\langle \nul, \nul, \fal \rangle$;}
		\EndIf
		\State {$\langle pv2, cv2\rangle$ $\gets$ $\locvplus$($cv1$, $l$);} \label{lin:locvplus-l}
		\If{($cv2.k$ $\neq$ $l$)}
	     \State{return $\langle \nul, \nul, \fal \rangle$;}
		\EndIf
		\Else
		\State {$\langle pv2, cv2\rangle$ $\gets$ $\locvplus$($\vh$, $l$);}
		\If{($cv2.k$ $\neq$ $l$)}
	     \State{return $\langle \nul, \nul, \fal \rangle$;}
		\EndIf
		\State {$\langle pv1, cv1\rangle$ $\gets$ $\locvplus$($cv2$, $k$);}
		\If{($cv1.k$ $\neq$ $k$)}
	     \State{return $\langle \nul, \nul, \fal \rangle$ ;} 
		\EndIf
		\EndIf
		\State {returns $\langle cv1, cv2, \tru \rangle$;} 
        \EndProcedure\label{convpend}
		\algstore{convplus}
\end{algorithmic}	
\hrule
\begin{algorithmic}[1]
	\algrestore{convplus}
	\small
		\Procedure{$\loccplus$($v$, $k$)}{}
	\State {$pv \gets v$; $cv \gets pv.\vnext$;} \label{lin:locv3-start}
		\While{($\tru$)} \label{lin:loccplus-search-again}
		\If{($cv.k \geq k$)}
        \State{return $\langle pv, cv \rangle$;}
        \EndIf
        \State {$pv \gets cv$; $cv \gets \unMarkedRef(cv.\vnext)$;} 
		\EndWhile
		\EndProcedure
		\algstore{loccplus}
	\end{algorithmic}
	\end{subfigure}
	 \begin{subfigure}{.23\textwidth}
	\begin{algorithmic}[1]
	\algrestore{loccplus}
	\small
		\Procedure{$\concplus$ ($k$, $l$)}{}
		\If{($k < l$)}
		\State {$\langle pv1, cv1\rangle$ $\gets$ $\loccplus$($\vh$, $k$);}
		\If{($cv1.k$ $\neq$ $k$)}
	     \State{return $\langle \nul, \nul, \fal \rangle$;}
		\EndIf
		\State {$\langle pv2, cv2\rangle$ $\gets$ $\loccplus$($cv1$, $l$);}
		\If{($cv2.k$ $\neq$ $l$)}
	     \State{return $\langle \nul, \nul, \fal \rangle$;}
		\EndIf
		\Else
		\State {$\langle pv2, cv2\rangle$ $\gets$ $\loccplus$($\vh$, $l$);}
		\If{($cv2.k$ $\neq$ $l$)}
	     \State{return $\langle \nul, \nul, \fal \rangle$;}
		\EndIf
		\State {$\langle pv1, cv1\rangle$ $\gets$ $\loccplus$($cv2$, $k$);}
		\If{($cv1.k$ $\neq$ $k$)}
	     \State{return $\langle \nul, \nul, \fal \rangle$ ;} 
		\EndIf
		\EndIf
		\State {returns $\langle cv1, cv2, \tru \rangle$;} 
        \EndProcedure
        \algstore{concplus}
\end{algorithmic}
	\end{subfigure}
    \vspace{-2mm}
	\caption{Pseudo-codes of \loceplus, \concplus, \loccplus and \concplus.}\label{fig:e2-methods}
\end{figure*}

 Before every attempt of executing a \CAS to add an \enode, an $\adde(k, l)$ checks if the \vnode{s} $v(k)$ and $v(l)$ are logically removed. This check ensures avoiding an interesting wrong execution as illustrated in the \Cref{fig:noseqhist}.  If the  \enode $e(k, l)$ is found in the \elist of $v(k)$, the string \ep is returned. A successful \CAS to add the \enode returns \eadd.

A $\reme(k, l)$ operation, \cref{remestart} to \cref{remeend}, works along the similar lines as $\adde(k, l)$ during traversal to locate the \vnodes $v(k)$ and $v(l)$ and the \enode $e(k, l)$. If all of them found in the data structure, it uses \CAS to first logically remove the \enode, \cref{lin:cas-reme}, and thereafter physically remove the \enode, \cref{lin:cas-reme-phy}. In case either of the vertices $v(k)$ or $v(l)$ not present in the data structure, it returns \vntp. If the \enode $e(k, l)$ not found, it returns \entp. On a successful \CAS to logically remove an \enode, it returns \er.

The operations $\adde(k, l)$ and $\reme(k, l)$ also increment an atomic counter at the node $v(k)$ using a \FAA, see the \cref{faa:ade,increm}.  This atomic counter facilitates in comparing the output of consecutive BFS traversals in the \getpath operation. We discuss it in the next section. A \loceplus, as it can be called from an \adde or a \reme, also increments the atomic counter before physically removing an \enode, see \cref{faa:loc}. It ensures that if a thread after logically removing an \enode got delayed then a helper thread increments the counter on its behalf.

A $\cone(k,l)$ operation, locates the \vnode{s} $v(k)$, $v(l)$ and the \enode{s} $e(k,l)$ similar to other edge operations. Before returning an indicative string as appropriate, it ensures that non of the nodes $v(k)$, $v(l)$ and $e(k,l)$ are marked, see the lines \ref{conestart} to \ref{coneend}. For the sake of efficiency. Like \conv, the \cone also does not indulge in helping the update operations during traversal.

\begin{figure}
\captionsetup{font=footnotesize}
	\centerline{\scalebox{0.6}{\input{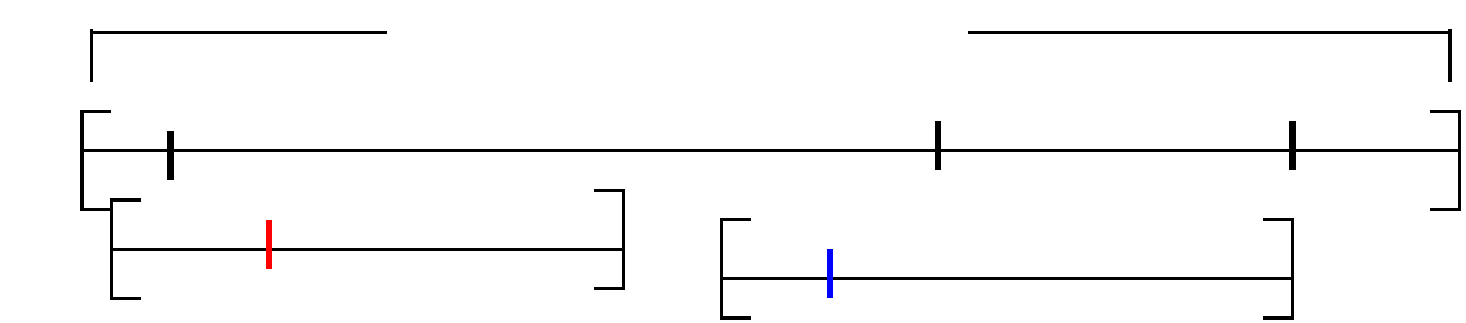_t}}}
	\caption{The reason we need to check (Line \ref{lin:adde-check-uv}) $v(k)$ or $v(l)$ is marked or not in \adde method. Thread $T_1$ trying to perform $\adde(k, l)$, first invokes $\loccplus(k)$. After $T_1$ has verified $v(k)$ to be present in \vlist, thread $T_2$ deletes $v(k)$ and thread $T_3$ adds $v(l)$. In the absence of Line \ref{lin:adde-check-uv} check, $T_1$ confirms $v(l)$ to be present and adds an edge $(k,l)$, thereby returning \eadd. However, this is an \texttt{illegal execution}. In no possible sequential history equivalent to the given concurrent execution will both the vertices  $v(k)$ and $v(l)$ exist together. By having an check in this scenario, $\adde(k, l)$ will return \vntp on checking that vertex $v(k)$ has been deleted.}
	\label{fig:noseqhist}
\end{figure}

\begin{figure*}[!htp]
\captionsetup{font=footnotesize}
	\begin{subfigure}{.33\textwidth}	
\begin{algorithmic}[1]
	\algrestore{concplus}
		\small
\renewcommand{\algorithmicprocedure}{\textbf{Operation}}	
		\Procedure{\getpath($k$, $l$)}{}\label{getpstart}
        \State{ tid $\gets$ this\_thread.get\_id();} 
	    \State {$\langle$ u, v, st $\rangle$ $\gets$ $\concplus$($k, l);$}  \label{lin:getpath-concplus}
		 \If{(st = \fal)}
	     \State {return \nul; } 
        \EndIf
        	\If{(\isMarked($u$) $\bigvee$  \isMarked($v$))}
        \State return {\nul;} 
        \EndIf
         \State{list $<\bfsnode>$  bTree; } 
         \State{$\langle$ st, bTree$\rangle$ $\gets$ \scan($u$, $v$, $tid$);} \label{lin:getpath-scan}
         \If{(st = \tru)}
        \State {return ``PATH $u$ TO $v$ FROM THE bTree''; }
         \Else {\hspace{2mm}return \nul; }
         \EndIf 
		\EndProcedure\label{getpend}
        \algstore{getpath}
\end{algorithmic}	
    \hrule
	\begin{algorithmic}[1]
	\algrestore{getpath}
		\small
		\Procedure{\scan($u$, $v$, $tid$)}{}\label{scanstart}
         \State{list $<\bfsnode>$  ot, nt ; } 
          \State{bool f1 $\gets$ \treec($u$, $v$, $ot$, $tid$); }
        \While{(\tru)} \label{whilescan}
           \State{bool f2 $\gets$ \treec($u$, $v$, $nt$, $tid$); } 
           \If{(f1 = \tru $\bigwedge$ f2 = \tru $\bigwedge$ \comparepath(ot, nt))}  \label{lin:scan-comparepath}
           \State{return $\langle$f2, nt$\rangle$;} 
           \Else 
           \If{(f1 = \fal $\bigwedge$ f2 = \fal $\bigwedge$ \comparetree(ot, nt))}\label{lin:scan-comparetree}
           \State{return $\langle$f2, \nul$\rangle$;}
           \EndIf
           \EndIf
           \State{f1 $\gets$ f2; ot $\gets$ nt;}\label{rescan}
    \EndWhile
		\EndProcedure\label{scanend}
        \algstore{scan}
\end{algorithmic}
	\end{subfigure}
    \begin{subfigure}{.33\textwidth}
	\begin{algorithmic}[1]
	\algrestore{scan}
		\small
		\Procedure{ \treec($u$, $v$, $bTree$, $tid$)}{}\label{trecstart}
       \State{queue $<$\bfsnode$>$ que;} 
         \State{cnt $\gets$ cnt $+$ 1; } 
         \State{u.visitedArray[tid] $\gets$ cnt ;} 
        \State{\bfsnode bNode(u, 0, \nul, \nul);}
        \State{bTree.Insert(bNode);} 
        \State{que.enque(bNode);} 
        \While{($\neg$que.empty())} 
        \State{\bfsnode  cvn $\gets$ que.deque();} 
        \State{eh $\gets$ cvn.n.enext;} 
        \For{(\enode itn $\gets$ eh.\enext to itn.\enext $\neq$ \nul)} 
       
        \If{($\neg$\isMarked(itn))}  
        \State{\vnode adjn $\gets$ itn.\pointv;}
        \If{($\neg$\isMarked(adjn))} 
        \If{(adjn = v)}
        \State{\bfsnode bNode(adjn, adjn.\lecount, cvn, \nul);}
         \State{bTree.Insert(bNode);} 
        \State{return \tru;} 
        \EndIf
        \If{($\neg$ \checkvisited(tid, adjn, cnt))}
        \State{adjn.\visitedarray[tid] $\gets$ cnt ;} 
        \State{\bfsnode bNode(adjn, adjn.\lecount, cvn, \nul);} 
         \State{bTree.Insert(bNode);} 
        \State{que.enque(bNode);} 
        
        \EndIf
        \EndIf
        \EndIf
        \EndFor
    \EndWhile
     \State{return \fal;}
		\EndProcedure\label{trecend}
        \algstore{getpathc}
\end{algorithmic}
	\end{subfigure}
	 \begin{subfigure}{.33\textwidth}
	\begin{algorithmic}[1]
	\algrestore{getpathc}
		\small
		\Procedure{ \comparetree($ot, nt$)}{}\label{ctrstart}
	 \If{(ot = \nul $\bigvee$ nt = \nul)}
	     \State {return $\fal$ ; }
        \EndIf
        \State{\bfsnode oit $\gets$ ot.Head, nit $\gets$ nt.Head;} 
        \Do 
        \If{(oit.n $\neq$ nit.n $\bigvee$ oit.\lecount $\neq$ nit.\lecount $\bigvee$ oldit.p $\neq$ newit.p)} {\hspace{2mm}return \fal; }
        \EndIf
        \State{oit $\gets$  oit.\bnext; nit $\gets$  nit.\bnext;}
        \doWhile{(oit $\neq$ ot.Tail $\bigwedge$ nit $\neq$ nt.Tail );} 
         \If{(oit.n $\neq$ nit.n $\bigvee$ oit.\lecount $\neq$ nit.\lecount $\bigvee$ oit.p $\neq$ nit.p)} {\hspace{2mm}return \fal; }
         \Else {\hspace{2mm}return $\tru$ ; }
         \EndIf
		\EndProcedure\label{ctrend}
        \algstore{comparetree}
\end{algorithmic}
\hrule
\begin{algorithmic}[1]
	\algrestore{comparetree}
		\small
		\Procedure{ \comparepath($ot, nt$)}{}\label{compathstart}
	 \If{(ot = \nul $\bigvee$ nt = \nul)}
	     \State {return $\fal$ ; }
        \EndIf
        \State{\bfsnode oit $\gets$ ot.Tail, nit $\gets$ nt.Tail;}
        \Do 
         \If{(oit.n $\neq$ nit.n $\bigvee$ oit.\lecount $\neq$ nit.\lecount $\bigvee$ oldit.p $\neq$ newit.p)} {\hspace{2mm}return \fal; }
        \EndIf
        \State{oit $\gets$  oit.p; nit $\gets$  nit.p;}
        \doWhile{(oit $\neq$ ot.Head $\bigwedge$ nit $\neq$ nt.Head );} 
         \If{(oit.n $\neq$ nit.n $\bigvee$ oit.\lecount $\neq$ nit.\lecount $\bigvee$ oit.p $\neq$ nit.p)} {\hspace{2mm}return \fal; }
         
         \Else {\hspace{2mm}return \tru; }
         \EndIf
		\EndProcedure\label{compathend}
	\end{algorithmic}
	\end{subfigure}
    \vspace{-2mm}
	\caption{Pseudo-codes of \getpath, \scan, \treec, \comparetree and \comparepath.}\label{fig:getpath-methods}
\end{figure*}

\subsection{The \of \getpath operation}

The design of the \getpath(see \Cref{fig:getpath-methods}) operation draws from the snapshot algorithm proposed by Afek et al. \cite{Afek}. A $\getpath(k, l)$, \cref{getpstart} to \ref{getpend}, first checks the presence of $v(k)$ and $v(l)$ by invoking the \concplus procedure at the \cref{lin:getpath-concplus}. After successfully checking the presence of both the vertices it goes to perform repeated BFS traversals by invoking the procedure \scan at the \cref{lin:getpath-scan}. If the \vnodes $v(k)$, $v(l)$ are not located, it right away returns \nul. 

The \scan procedure, \cref{scanstart} to \ref{scanend}, first initializes two lists of \bfsnode{s}. We call such a list a \bfstree. The two \bfstree{s} are used to hold collection of \vnode{s} in two consecutive BFS traversals. A reader not familiar with the BFS traversals in graphs may refer to any standard book on algorithms such as \cite{cormen2009introduction}. The procedure \treec, \cref{trecstart} to \ref{trecend}, takes in a \bfstree and fills it with nodes collected in a BFS traversal. A BFS traversal terminates as soon as the \vnode $v(l)$ is located. However, in case $v(l)$ could not be reached from $v(k)$, the traversal terminates after exhausting all the outgoing edges from $v(k)$ represented by the \enode{s} in its \elist. 

During a BFS traversal, we put markers on the \vnode{s} to keep track of the visited ones, see \cite{cormen2009introduction}. In a sequential implementation, a single boolean variable is good enough. However, in our case, when multiple threads perform BFS traversals not only concurrently but also repeatedly, a single boolean variable or for that matter an boolean array would not suffice. To keep track of visited \vnodes, we use the array \visitedarray in them. The size of \visitedarray is equal to the number of threads in the shared-memory system. Thus, a slot of \visitedarray, used as a counter for the number of visits, provides local marker for repeated traversals by a thread.

The return of the \treec procedure is a boolean indicating if $v(l)$ was located. If the return of two consecutive \treec do not match we discard the old \bfstree and start collecting a new one, see the \cref{rescan}. However, if the returns match and both are \tru, it indicates that both the times a path from $v(k)$ to $v(l)$ could be discovered. Hence, we compare the collected paths, which are subset of the two \bfstree{s}. Please notice that we can not return either of them unless the two paths match because we are not sure if either existed at any instant during the lifetime of \getpath. 


The procedure \comparepath, \cref{compathstart} to \ref{compathend}, compares two \bfstree{s} with respect to the paths between $v(k)$ and $v(l)$: it starts from the last \bfsnode{s} in the two \bfstree{s} and follows the \bfsnode-pointers $p$ that takes to the previous node in a possible path; at any \bfsnode if there is a mismatch between the contained \vnodes, it terminates. 

If the returns of two consecutive \treec are \fal, it indicates that both the times a path was not traced between $v(k)$ and $v(l)$. However, to be sure that during the entire lifetime of the $\getpath(k, l)$, at every point in time no path ever existed, we need to compare the two returned \bfstree{s}. The comparison of the two \bfstree{s} in entirety is done in the procedure \comparetree, see the \cref{ctrstart} to \ref{ctrend}. 

If the comparison of two consecutive \bfstree{s} do not match in procedures \comparetree or \comparepath, we discard the first \bfstree and restart the tree-collection.

While comparing two \bfstree{s} or the paths therein in the procedures \comparetree or \comparepath, the \bfstree{s} require to be compared with respect to not just the \bfsnode{s}, but also the counters \lecount of the \bfsnode{s} contained in them. We explain this requirement below. 

Consider an adversary against a \getpath operation. Consider two consecutive BFS traversals. Suppose that during the first traversal after the \treec discovered that no path to $v(l)$ existed via a \vnode $v_{i}$ and therefore moved to another \vnode $v_{i+1}$ and continued the traversal until its exploration exhausted. As we know that once $v_{i}$ is visited, it will not be revisited. Now suppose that when \treec was visiting nodes after $v_{i+1}$, the adversary added an edge $(v_{i},v(l))$ that made a path exist between $v(k)$ and $v(l)$. However, before we could start the second BFS traversal, the edge $(v_{i},v(l))$ was removed by the adversary bringing the graph exactly at the same state at which the first traversal had started. Now, suppose that even during the second traversal the same game is played by the adversary. In such a scenario, if we just matched the two consecutive \bfstree{s} with respect the the collected \bfsnode{s}, the operation $\getpath(k, l)$ would return \nul indicating that a path did not exist during its lifetime, which would be clearly incorrect.

During the edge modification operations, before an \enode $(v_{i},v(l))$ is physically removed, the atomic counter \ecount at $v_{i}$ is necessarily incremented by either the operation that logically removed $(v_{i},v(l))$ or a helping operation at the lines \ref{increm} or \ref{faa:ade} or \ref{faa:loc}. This ensures that we get to check the adversaries as described above. Although, it may make a \getpath continue until all the modification operations in the graph stop, we still have an ensured correct return of a \rbty query.

\label{sec:getPath}

\section{Correctness: Linearization Points}\label{sec:proof}
\subsection{Linearizability}
\begin{figure*}
	\captionsetup{font=footnotesize}
	\resizebox{0.97\linewidth}{!}{\input{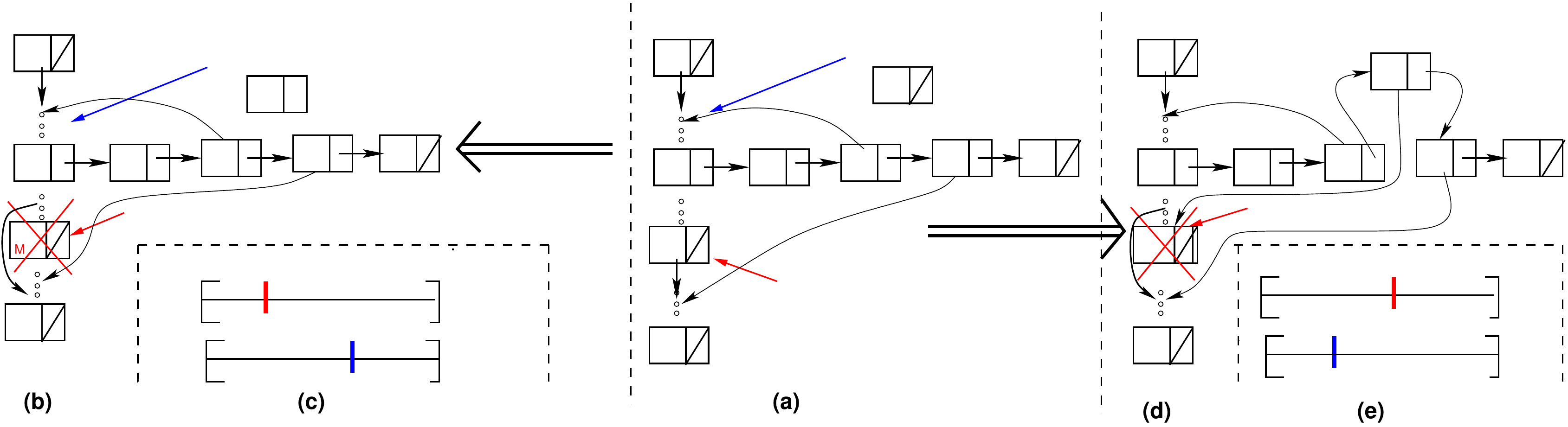_t}}
	\caption{Execution with two concurrent operations - $\adde$ and $\remv$}
	\label{fig:lp-figs}
\end{figure*}
The ADT operations implemented by the data structure are represented by their invocation and return steps. To prove the correctness of an arbitrary concurrent execution of a set of ADT operations, we show that they satisfy the consistency framework \textit{\lbty} \cite{HerlWing:1990:TPLS}. To do that, firstly we show that it is possible to assign an atomic step as a \emph{linearization point} (\emph{\lp}) inside the execution interval of each of the operations. Thereafter, we also show that the data structure invariants are maintained across the \lp{s}. Thereby, it proves that an arbitrary concurrent execution is equivalent to a valid sequential execution obtained by ordering the operations by their \lp{s}.
\begin{theorem}\normalfont The ADT operations implemented by the presented algorithm are linearizable.\end{theorem}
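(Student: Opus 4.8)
The plan is to exhibit, for every ADT operation, a single atomic step within its execution interval that serves as its linearization point (\lp), and then to show that the data-structure invariants --- the \vlist and every \elist stay sorted, and a \vnode (\enode) belongs to the abstract graph \abg exactly when it is reachable from the head along unmarked \vnext (\enext) pointers --- are preserved across all the \lp{s}. Once both are in hand, ordering the operations by the real-time position of their \lp{s} produces a sequential history that is legal with respect to the sequential specification of \Cref{subsec:ds}, which is exactly what \lbty demands. I would first fix the map from a shared-memory configuration to its \abg: $v(k)\in V$ iff an unmarked \vnode with key $k$ is reachable from \vh along unmarked \vnext pointers, and $e(k,l)\in E$ iff in addition an unmarked \enode with key $l$ is reachable along unmarked \enext pointers from that \vnode and its \pointv target is unmarked. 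Since a pointer is marked by exactly one successful \CAS and is never unmarked, \abg changes at precisely one atomic step per successful update.

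For the vertex and edge add/remove/contains operations the \lp{s} follow the Harris-style \lf list argument. A successful update linearizes at its state-changing \CAS --- \cref{lin:cas-addv,lin:cas-remv,lin:cas-adde,lin:cas-reme} --- at which instant \abg gains or loses the corresponding element; the subsequent physical-unlink \CAS is unobservable and needs no \lp. An unsuccessful update, together with \conv and \cone, linearizes at the read that witnesses (non-)membership: \conv at \cref{lin:conv-check}, \cone at \cref{lin:cone-check}, and a failing add/remove at the matching read inside \locvplus or \loceplus. The one genuinely graph-specific wrinkle is the edge--vertex race depicted in \Cref{fig:noseqhist} and \Cref{fig:lp-figs}: an \adde must not commit an edge whose endpoint has been concurrently removed. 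I would argue that the guard rejecting a marked $u$ or $v$ at \cref{lin:adde-check-uv}, re-evaluated on every loop iteration immediately before the insertion \CAS, forces both $v(k)$ and $v(l)$ to be unmarked at the chosen \lp, so the history respects the ``both endpoints present'' precondition of \adde; the symmetric guard in \reme plays the same role. A short case split on whether the endpoint's marking \CAS precedes or follows the insertion \CAS shows a consistent order always exists.

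The hard part is the \of \getpath, where the double-collect snapshot argument of Afek et al.~\cite{Afek} carries the weight. The key observation is that the per-vertex counter \ecount is incremented by a \FAA on \emph{every} edge modification touching that vertex --- by the modifier at \cref{faa:ade,increm} or by a helper inside \loceplus at \cref{faa:loc} --- so it is \emph{monotonically non-decreasing} and never reset, hence acts as a per-vertex version number. The \scan loop runs two \emph{sequential} \treec collections (the first completing before the second begins), recording each vertex with its \lecount, and returns only when \comparepath or \comparetree certifies that the two collections agree on the collected \vnode{s}/edges \emph{and} on their \lecount values (\cref{lin:scan-comparepath,lin:scan-comparetree}). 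I would then prove the consistency lemma: if every collected vertex $w$ has equal \lecount in both collections (say read at times $a_w$ and $b_w$), then, because \ecount only increases, the out-edges of $w$ were unchanged throughout $[a_w,b_w]$; since collection one entirely precedes collection two, a single instant $\tau$ between them lies in \emph{every} $[a_w,b_w]$, so the collected (sub)tree is a faithful snapshot of the relevant part of \abg at $\tau$. I set the \lp of \getpath to this $\tau$. In the matching-\tru case \comparepath certifies that each consecutive edge of the returned path was present at $\tau$, so the path is valid in \abg at $\tau$; in the matching-\fal case \comparetree certifies a consistent snapshot of the entire BFS tree reachable from $v(k)$, so no path to $v(l)$ existed at $\tau$. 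This is exactly the adversary ruled out by the counter argument in \Cref{sec:getPath}: without \ecount an edge could be added and removed between the collections undetected, but a changed counter forces a re-collection.

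I expect the main obstacle to be making this snapshot lemma rigorous in the reachability setting rather than the flat-register setting: I must handle \vnode{s} that are logically removed mid-traversal (the BFS skips marked nodes, so the snapshot is of the \emph{live} subgraph), confirm that the counter read and the adjacency read of a vertex are ordered so that any intervening edge modification always bumps \ecount and is caught, verify that $v(k)$ and $v(l)$ are themselves present at $\tau$ (using the entry guard and the fact that $v(k)$ roots the collection), and argue that a \treec which \emph{terminates} does so only after exhausting the live out-edges of every explored vertex, so absence of $v(l)$ from the tree genuinely means unreachability at $\tau$. The remaining invariant-preservation and legality checks across the update \lp{s} are routine sorted-list bookkeeping. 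Assembling the per-operation \lp{s} into one real-time order and invoking the preserved invariants then yields a legal sequential witness, establishing \lbty of all ADT operations.
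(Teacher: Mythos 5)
Your overall skeleton --- fix an abstraction map, assign an \lp{} to each operation, check the invariants across the \lp{s}, and order by real time --- is the same as the paper's, and your treatment of \getpath is actually \emph{more} detailed than the paper's: the paper simply declares the \lp{} to be the last atomic read of the $(m-1)^{st}$ \treec{} call, whereas you argue, via monotonicity of \ecount, why an instant $\tau$ between two agreeing collections is a consistent snapshot of the live subgraph. That part is a genuine strengthening (modulo the \CAS-before-\FAA window that you yourself flag as needing verification).

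The gap is in the \lp{} placement for the vertex and edge operations: you pin every \lp{} to a fixed code point, but this algorithm needs \emph{floating} \lp{s} that depend on concurrent operations, and two of your specific claims fail. First, the guard at \cref{lin:adde-check-uv} is \emph{not} ``immediately before'' the insertion \CAS{}: the call to \loceplus{} at \cref{lin:adde-loceplus} and the node allocation sit in between, so a concurrent \remv{} can mark $u$ or $v$ after the guard passes and before the \CAS{} at \cref{lin:cas-adde} succeeds. At your chosen \lp{} the endpoint may already be logically removed; under your own abstraction map that \CAS{} then adds nothing to \abg{} while the operation still returns \eadd{} --- an illegal step in the induced sequential history. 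The paper resolves this by moving the \lp{} of such an \adde{} to just before the \lp{} of the first concurrent \remv{}; your ``short case split'' gestures at this but contradicts your stated claim that both endpoints are unmarked at the \CAS{}. Second, linearizing a failing lookup ``at the read that witnesses non-membership'' fails when a concurrent \addv{} (or \adde) inserts the sought node \emph{behind} the traversal: at the moment \conv{} reaches a key greater than $k$ and performs the check at \cref{lin:conv-check}, $v(k)$ may already be reachable and unmarked, so that read witnesses nothing; the \lp{} must again float (the paper places it at the invocation, or just after the \lp{} of the concurrent remove). Neither fix is deep, but as written your \lp{} assignment does not yield a legal sequential witness, which is precisely the delicate part of this proof.
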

\begin{proof}
For ease of presentation, we discuss the LPs in case by case manner depending on the return of the operations. 
\begin{enumerate}
	\item $\addv(k)$: Two cases based on the return values:
	\begin{enumerate}
		\item \tru: Successful \cas execution at the \cref{lin:cas-addv}.
		\item \fal: The vertex is already present. The \lp is the atomic read of the \vnext pointer pointing to $v(k)$. 
	\end{enumerate}
	\item $\remv(k)$: Two cases based on the return values:	
	\begin{enumerate}
		\item \tru: Successful \CAS execution at the \cref{lin:cas-remv} (logical removal).
		\item \fal: If a concurrent \remv operation $op$ removed $v(k)$ then just after the LP of $op$. $v(k)$ did not exist in the \vlist then at the invocation of $\remv(k)$.\label{step:remv-fal} 
	\end{enumerate}
	\item $\conv(k)$: Two cases based on the return value:	
	\begin{enumerate}
		\item \tru: The atomic read of the \vnext pointing to $v(k)$.
		\item \fal: same as the case \ref{step:remv-fal}, where \remv returns \fal. 
	\end{enumerate}
	\item $\adde(k, l)$: Three cases depending on the return values:
	\begin{enumerate}
		\item \eadd: Two sub-cases depending on if there is a concurrent $\remv(k)$ or a $\remv(l)$: \label{step:eadd}
		\begin{enumerate}
			\item No concurrent $\remv(k)$ or $\remv(l)$: the successful \CAS execution at the \cref{lin:cas-adde}.
			\item With concurrent $\remv(k)$ or $\remv(l)$: just before the first remove's \lp. A sample case for determining the \lp when a \remv is concurrent to an \adde is shown in the \Cref{fig:lp-figs}.
		\end{enumerate}
		\item \ep: Similar to the case when \eadd is returned.
		\begin{enumerate}
			\item No concurrent $\remv(k)$ or $\remv(l)$ or $\reme(k, l)$: the atomic read of the \enext pointer pointing to $e(k, l)$.
			\item With concurrent $\remv(k)$ or $\remv(l)$ or $\reme(k, l)$: just before the first remove's (either vertex or edge) \lp.
		\end{enumerate}		
		\item \vntp: Two sub-cases
		\begin{enumerate}
			\item If both $v(k)$ and $v(l)$ were in the \vlist at the invocation of $\adde(k, l)$ and a concurrent \remv removed $v(k)$ or $v(l)$ or both then just after the LP of the earlier \remv.
			\item If $v(k)$ and/or $v(l)$ were not present at the invocation of $\adde(k, l)$ then the invocation point itself.
		\end{enumerate}
	\end{enumerate}			
	\item $\reme(k, l)$: Similar to \adde, we have three cases depending on the return values:
	\begin{enumerate}
		\item \er: This is similar to the case \ref{step:eadd} of \adde. We have two sub-cases depending on if there are any concurrent \remv \mth{s}:
		\begin{enumerate}
			\item No concurrent $\remv(k)$ or $\remv(l)$: the successful \CAS execution at the Line \ref{lin:cas-reme}.
			\item With concurrent $\remv(k)$ or $\remv(l)$: just before the first remove's \lp.
		\end{enumerate}
		\item \entp: If a concurrent \reme operation removed $e(k,l)$ then just after its \lp, otherwise at the invocation of $\reme(k, l)$ itself. 
		
		\item \vntp: Absolutely same as the case \adde{} returning ``\vntp''.
		
	\end{enumerate}			
	
	\item $\cone(k, l)$: Similar to \reme, we have three cases depending on the return values. All the steps are very similar \reme. 
	\begin{enumerate}
		\item \ep: We have two sub-cases depending on if there are any concurrent \remv \mth{s}:
		\begin{enumerate}
			\item No concurrent $\remv(k)$ or $\remv(l)$: the atomic read of the \enext pointer pointing to $e(k, l)$..
			\item With concurrent $\remv(k)$ or $\remv(l)$: just before the \lp of the earlier \remv. 
		\end{enumerate}
		\item \entp: Absolutely same as the case of \reme returning ``\entp''.
		\item \vntp: Absolutely same as the case of \adde returning ``\vntp''.
	\end{enumerate}	
	
	\item $\getpath(k, l):$	Here, there are two cases:
	\begin{enumerate}
		\item \getpath invoke the \scan \mth: Assuming that \scan invokes $m$ (greater than equal to 2) \treec procedures. Then it is the last atomic read step of the $(m-1)^{st}$ \treec call.
		\item \getpath does not invoke the \scan \mth: If a concurrent \remv operation $op$ removed $v(k)$ or $v(l)$. Then just after the LP of $op$. If $v(k)$ or $v(l)$ did not exist in the \vlist before the invocation then at the invocation of $\getpath(k, l)$.
\end{enumerate}
\end{enumerate}
In the above discussion it is easy to observe that each of the LPs belong the interval between the invocation and return steps of the corresponding operations.

We can also observe in the algorithm that in any call of an $\addv(k)$, the traversal terminates at the \vnode where the key is just less than or equal to $k$. Similar argument holds true for a call of $\adde(k,l)$. Before every reattempt of $\addv(k)$ and $\adde(k,l)$, a traversal is performed following the sorted order of the \vlist and the \elists. This ensures that addition of a new \vnode or \enode does not violate the invariants of the data structure. The removal of a \vnode or an \enode by a \remv or \reme operation by default does not disturb the sorted order of the \vlist or any \elist. The lookup and \getpath operations do not modify the data structure. Thus, it can be observed that the operations maintain the invariants of the data structure across their \lp{s}.

This completes the proof of \lbty.
\end{proof}
\ignore{
 \begin{lemma}
\label{lem:lin}
The history $H$ generated by the interleaving of any of the methods of the \cgds, is \lble. \\
\textnormal{Proof in \cite{PeriSS16}}.
\end{lemma}

\begin{lemma}
\label{lem:lf}
The methods of the \cgds have guarantees the \lf progress guarantees.\\
\textnormal{Proof in \cite{PeriSS16}}.
\end{lemma}
}
\subsection{Non-blocking Progress}
\begin{theorem}\normalfont
	In the presented algorithm
	\begin{enumerate}[label=(\roman*)]
		\item If the set of keys is finite, the operations \conv and \cone are wait-free.\label{lflem1}
		\item The operation \getpath is \of. \label{lflem2}
		\item The operations \addv, \remv, \conv, \adde, \reme, and \cone are \lf. \label{lflem3}
	\end{enumerate}
\end{theorem}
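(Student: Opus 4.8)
The plan is to prove the three guarantees separately, reusing one structural fact about the sorted lists for parts (i) and (iii) and the double-collect idea for part (ii).

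\textbf{Part (i), wait-freedom of \conv and \cone.} First I would stress the design choice emphasized in \Cref{sec:working-con-graph-methods}: unlike the update operations, these two lookups never help a pending removal and contain no retry or \texttt{goto}. The \conv operation simply walks the \vlist following \unMarkedRef pointers, and \cone first calls \concplus/\loccplus (the non-helping locate) and then walks the \elist of $v(k)$ with a single forward loop. Each step strictly advances to a successor whose key is strictly larger, because the \vlist and every \elist are sorted --- the invariant already established in the linearizability proof. Since the key set is finite, at most $|K|$ distinct keys, hence boundedly many nodes, can precede the target key. The only subtlety I would need to settle is that concurrent insertions cannot make the cursor stall or revisit: as the cursor only moves forward along next pointers and keys increase strictly along the list, any node inserted ahead of it still carries a key in the finite window below the target, so only finitely many forward steps are possible. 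This bounds each traversal by $O(|K|)$ steps independent of concurrency, giving wait-freedom.

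\textbf{Part (ii), obstruction-freedom of \getpath.} Here I would argue directly from the definition, for a thread running in isolation. I would first show that a single \treec call terminates: it is a non-recursive BFS over a finite graph, and the per-thread slot of \visitedarray together with \checkvisited guarantees each \vnode is enqueued at most once, so the queue empties after boundedly many dequeues and edge scans. Next I would show that the \scan loop at \cref{whilescan} iterates at most twice in isolation: with no concurrent modification the graph is identical during the two consecutive \treec collections, so the two \bfstree{s} agree node-for-node, including the \lecount counters and the parent pointers $p$; hence \comparepath (when a path was found) or \comparetree (when none was) returns \tru and \scan returns. Thus in isolation \getpath performs \concplus plus at most two BFS traversals and one comparison, a finite number of steps, which is exactly obstruction-freedom. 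I would close by noting this is only \of and not \lf, since the adversarial insert/remove scenario described after \treec can defeat every double collect under continuous interference.

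\textbf{Part (iii), lock-freedom of the updates and lookups.} This is the step I expect to be the main obstacle, precisely because, as the paper notes, lock-freedom is not composable, so I cannot assemble it from the component lists and must argue globally. I would proceed by contradiction: assume an execution in which, past some time $T$, infinitely many steps are taken but no \addv, \remv, \adde, or \reme ever completes (the lookups are handled as in part (i) --- their non-termination already forces infinitely many concurrent \addv completions, hence system progress). Every reattempt of the update operations is triggered either by a failed \CAS at an insertion, marking, or unlinking line (e.g.\ \cref{lin:cas-addv}, \cref{lin:cas-remv}, \cref{lin:cas-remv-phy}, \cref{lin:cas-adde}, \cref{lin:cas-reme}, \cref{lin:cas-reme-phy}) or by a helping restart inside \locvplus/\loceplus after a failed cleanup \CAS; in each case the failure is caused by some thread's \emph{successful} state-changing \CAS. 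Hence infinitely many structural \CAS succeed after $T$. I would classify these into (a) insertions, (b) logical removals (marking), and (c) physical unlinkings, and observe that every successful insertion \CAS completes the \addv or \adde performing it --- contradicting the assumption --- so past $T$ only markings and unlinkings succeed. But with no further insertions the number of linked nodes is non-increasing and each unlinking strictly decreases it, while each marking strictly decreases the number of unmarked nodes; therefore only finitely many markings and unlinkings can succeed after $T$, contradicting that infinitely many do. The \FAA counter increments (\cref{faa:ade} and companions) never loop and are irrelevant to progress. The delicate point on which I expect to spend the most care is this charging argument: certifying that every retry and every \texttt{goto} is covered by a \emph{distinct} successful structural \CAS, and in particular that the helping loops in \locvplus and \loceplus cannot spin without producing such a \CAS.
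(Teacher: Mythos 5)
Your proposal is correct, and for parts (ii) and (iii) it is noticeably more careful than the paper's own argument, so it is worth recording where the two diverge. For part (i) you and the paper make the same point: a finite key set bounds the number of nodes that can ever precede the target key, and the non-helping lookups only move forward through strictly increasing keys, so \conv and \cone terminate in a bounded number of steps regardless of interference; your extra remark that concurrent insertions cannot stall the cursor is a useful detail the paper omits. For part (ii) the paper argues only the oblique half --- that a concurrent modification forces \comparepath or \comparetree to fail and keeps the \scan loop spinning --- and leaves implicit the positive half that obstruction-freedom actually requires, namely termination in isolation. You prove exactly that positive half: a single \treec is a terminating BFS on a static finite graph thanks to the per-thread visited markers, and two isolated collections agree node-for-node including the \lecount and parent fields, so the comparison succeeds on the next iteration. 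Your version is the more faithful proof of the stated property. For part (iii) the paper contents itself with a short helping argument (an operation obstructed by a marked pointer helps the obstructing removal to completion, so some operation finishes), whereas you run the standard counting argument by contradiction: every retry is caused by a successful structural \CAS, every successful insertion \CAS completes its operation, and absent insertions only finitely many markings and unlinkings can succeed. This buys the rigor needed to address the non-composability concern the paper itself raises. One small refinement: the charging need not be to \emph{distinct} successful \CAS steps --- it suffices to observe that after the last successful structural \CAS every pending operation's next read-then-\CAS attempt succeeds, so infinitely many failures force infinitely many successes.
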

\begin{proof}
It is easy to show \ref{lflem1}. If the set of keys is finite, the graph size has a fixed upper bound. Which implies that there are only finite number of \vnodes between $v(-\infty)$ and $v(\infty)$ in the \vlist. A $\conv(k)$ necessarily terminates on reaching $v(\infty)$ which will be done in a finite number of steps of any non-faulty thread. A similar argument holds true for a \cone.
	
It is also easy to see in the algorithm that whenever a modification operation is concurrent to a \getpath, a  \comparepath or a \comparetree can not return true, enforcing the While loop at the \cref{whilescan} in \scan procedure to not terminate. Therefore, unless the steps are taken in isolation, a non-faulty thread that calls the \getpath will not return as long as a concurrent operation is an arbitrary ADT operation. This shows \ref{lflem2}. 
	 
	 In the design of our algorithm, we can see that whenever an addition or a removal operation is obstructed by a concurrent removal operation by way of a marked pointer, the obstructing removal operation is necessarily helped ensuring its return. Addition and lookup do not require help by a concurrent operation. Therefore in an arbitrary concurrent execution comprising of any arbitrary data structure operation, at least one operation would complete in a finite number of steps taken by a non-faulty thread. Thus, the operations \addv, \remv, \conv, \adde, \reme, and \cone are \lf. This shows \ref{lflem3}.

\end{proof}

\section{Simulation Results and Analysis}
\label{sec:results}

We performed our tests on a workstation with Intel(R) Xeon(R) E5-2690 v4 CPU containing 56 cores running at 2.60GHz. Each core supports 2 logical threads. Every core's L1 - 64K, L2 - 256K cache memory is private to that core; L3-35840K cache is shared across the cores. The tests were performed in a controlled environment, where we were the sole users of the system. The implementation\footnote{The complete source code of our implementation is available on Github \cite{Pdcrl-ConcurrentGraphDS2017}.} has been done in C/C++ (without any garbage collection) and multi-threaded implementation is based on Posix threads.

In the experiments, we start with an initial graph of $1000$ vertices and approximately $\binom{1000}{2}/4 \approx 125000$ edges added randomly. The number of edges is approximately a quarter of the total number of edges in a complete graph with $1000$ vertices. When the program begins, it creates a fixed number of threads (1, 10, 20, 30, 40, 50, 60 and 70) and each thread randomly performs a set of operations chosen by a particular workload distribution. The evaluation metric used is the number of operations completed in a unit time. We measure throughput obtained on running the experiment for 20 seconds. Each data point is obtained by averaging over 5 iterations. 
We compare the non-blocking graph with its sequential and coarse-grained counterparts in two separate sets of experiments comprising: (a) the ADT operations excluding \getpath, and (b) all the ADT operations. In the experiments, the following workload distributions were considered.

In the first set of experiments, the distribution over the ordered set of operations $\{$\addv, \remv, \conv, \adde, \reme, \cone$\}$ are (1) \textit{Lookup Intensive}: $($$2.5\%$, $2.5\%$, $45\%$, $2.5\%$, $2.5\%$, $45\%$$)$, see the \figref{cgds-90-10}. (2) \textit{Equal Lookup and Updates}: $($$12.5\%$, $12.5\%$, $25\%$, $12.5\%$, $12.5\%$, $25\%$$)$, see the \figref{cgds-50-50}. (3) \textit{Update Intensive}: $($$22.5\%$, $22.5\%$, $5\%$, $22.5\%$, $22.5\%$, $5\%$$)$, \figref{cgds-10-90}. 

In the second set of experiments, the distribution over the ordered set of operations $\{$\addv, \remv, \conv, \adde, \reme, \cone, \getpath$\}$ are -  (1) \textit{Lookup Intensive}: $($$2\%$, $2\%$, $45\%$, $2\%$, $2\%$, $45\%$, $2\%$$)$, see the  \figref{gds-90-10}. (2) \textit{Equal Lookup and Updates}: $($$24\%$, $24\%$, $12.5\%$, $24\%$, $24\%$, $12.5\%$, $2\%$$)$, see the  \figref{gds-50-50}. (3) \textit{Update Intensive}: $($$22.5\%$, $22.5\%$, $4\%$, $22.5\%$, $22.5\%$, $4\%$, $2\%$$)$, see the  \figref{gds-10-90}. In this set of experiments, please note that we always take only $2\%$ \getpath operations considering that its overhead in comparison to other operations is significant.

In the plots, firstly, we observe that the non-blocking algorithm is highly scalable with the number of threads in the system: only after the available cores saturate with the threads, which is at $56$ threads, the throughput numbers stop increasing. On the other hand, the coarse-grained lock-based version shows performance degradation with the increasing number of threads. This is in line with the serious drawback associated with coarse-grained locks: the contention among threads to acquire the global lock actually increases with the increasing number of threads. In fact, the coarse-grained lock-based implementation performs worse than even the sequential implementation, as soon as concurrency kicks in the system. In terms of absolute numbers, we see a marginal difference in the two sets of experiments. It indicates that the overhead of \rbty query is light. On an average, we observe that the non-blocking algorithm offers $5$-$7$x increase in the throughput in comparison to the sequential counterpart, which in any case outperforms the coarse-grained locking algorithm with multiple threads. 
\begin{figure}[H]
\captionsetup{font=footnotesize}
\begin{subfigure}[b]{0.23\textwidth}
    \captionsetup{font=footnotesize, justification=centering}
   \caption{Lookup Intensive}
    \centering
        \resizebox{\linewidth}{!}{
   	\begin{tikzpicture}[scale=0.230]
	\begin{axis}[
    ybar,
    xmin=1,
    xmax=70,
    enlargelimits=0.2,
   xticklabels={1, 10, 20, 30, 40, 50, 60, 70},
   xlabel near ticks,
   xtick=data,
    bar width=5pt,
    legend style={at={(0.21,0.97)},anchor=north},
	xlabel=No of threads,
	ylabel=Throughput ops/sec, 
    ylabel near ticks,]
	\addplot [butter1, fill=butter1] table [x=Threads, y=$Sequential$]{results/ds90101.dat};
		\addlegendentry{$Sequential$}
	\addplot[chameleon1,fill=chameleon1] table [x=Threads, y=$Coarse$]{results/ds90101.dat};
    	    \addlegendentry{$Coarse$}
	\addplot [plum1,fill=plum1] table [x=Threads, y=$Lock-Free$]{results/ds90101.dat};
    \addlegendentry{$Lock\text{-}Free$}
    \addplot[blue,sharp plot,update limits=false] 
	coordinates {(-15,37468) (90,37468)} 
	node[above] at (axis cs:1,37468) {base-line};
	\end{axis}
	\end{tikzpicture}
        }
        \label{fig:cgds-90-10}
    \end{subfigure}
    \begin{subfigure}[b]{0.23\textwidth}
    \setlength{\belowcaptionskip}{2mm}
    \captionsetup{font=footnotesize, justification=centering}
     \caption{Equal Lookup and Updates}
    \centering
    \resizebox{\linewidth}{!}{
	\begin{tikzpicture} [scale=0.230]
	\begin{axis}[
    ybar,
    xmin=1,
    xmax=70,
    enlargelimits=0.2,
   xticklabels={1, 10, 20, 30, 40, 50, 60, 70},
   xlabel near ticks,
   xtick=data,
    bar width=5pt,
    legend style={at={(0.5,121,0.97)},anchor=north},
	xlabel=No of threads,
	ylabel=Throughput ops/sec,
    ylabel near ticks]
	\addplot [butter1, fill=butter1] table [x=Threads, y=$Sequential$]{results/ds50501.dat};
	\addplot[chameleon1,fill=chameleon1] table [x=Threads, y=$Coarse$]{results/ds50501.dat};
	\addplot[plum1,fill=plum1] table [x=Threads, y=$Lock-Free$]{results/ds50501.dat};
    \addplot[blue,sharp plot,update limits=false] 
	coordinates {(-15,14023) (90,14023)} 
	node[above] at (axis cs:1,14023) {base-line};
	\end{axis}
	\end{tikzpicture}
        }
        \label{fig:cgds-50-50}
    \end{subfigure}

\vspace{-4mm}
\begin{subfigure}[b]{0.23\textwidth}
    \captionsetup{font=footnotesize, justification=centering}
   \caption{Update Intensive}
        \centering
        \resizebox{\linewidth}{!}{
           \begin{tikzpicture}[scale=0.230]
	\begin{axis}[
    ybar,
    xmin=1,
    xmax=70,
    enlargelimits=0.2,
   xticklabels={1, 10, 20, 30, 40, 50, 60, 70},
   xlabel near ticks,
   xtick=data,
    bar width=5pt,
    legend style={at={(0.621,0.6597)},anchor=north},
	xlabel=No of threads,
	ylabel=Throughput ops/sec,
    ylabel near ticks]
	\addplot [butter1, fill=butter1] table [x=Threads, y=$Sequential$]{results/ds10901.dat};
	\addplot[chameleon1,fill=chameleon1] table [x=Threads, y=$Coarse$]{results/ds10901.dat};
	\addplot[plum1,fill=plum1] table [x=Threads, y=$Lock-Free$]{results/ds10901.dat};
    \addplot[blue,sharp plot,update limits=false] 
	coordinates {(-15,9147) (90,9147)} 
	node[above] at (axis cs:1,9147) {base-line};
	\end{axis}
	\end{tikzpicture}
        }
        \label{fig:cgds-10-90}
    \end{subfigure}
\begin{subfigure}[b]{0.23\textwidth}
    \captionsetup{font=footnotesize, justification=centering}
    \caption{ Lookup Intensive}
    \centering
        \resizebox{\linewidth}{!}{
   	\begin{tikzpicture}[scale=0.230]
	\begin{axis}[
    ybar,
    xmin=1,
    xmax=70,
    enlargelimits=0.2,
   xticklabels={1, 10, 20, 30, 40, 50, 60, 70},
   xlabel near ticks,
   xtick=data,
    bar width=5pt,
    legend style={at={(0.721,0.697)},anchor=north},
	xlabel=No of threads,
	ylabel=Throughput ops/sec,
    ylabel near ticks]
	\addplot [butter1, fill=butter1] table [x=Threads, y=$Sequential$]{results/gds90101.dat};
	\addplot[chameleon1,fill=chameleon1] table [x=Threads, y=$Coarse$]{results/gds90101.dat};
	\addplot[plum1,fill=plum1] table [x=Threads, y=$Lock-Free$]{results/gds90101.dat};
    \addplot[blue,sharp plot,update limits=false] 
	coordinates {(-15,40033) (90,40033)} 
	node[above] at (axis cs:1,40033) {base-line};
	\end{axis}
	\end{tikzpicture}
        }
        \label{fig:gds-90-10}
    \end{subfigure}
    
    \begin{subfigure}[b]{0.23\textwidth}
    \captionsetup{font=footnotesize, justification=centering}
    \setlength{\belowcaptionskip}{2mm}
    \caption{Equal Lookup and Updates}
    \centering
    \resizebox{\linewidth}{!}{
	\begin{tikzpicture} [scale=0.230]
	\begin{axis}[
    ybar,
    xmin=1,
    xmax=70,
    enlargelimits=0.2,
   xticklabels={1, 10, 20, 30, 40, 50, 60, 70},
   xlabel near ticks,
   xtick=data,
    bar width=5pt,
    legend style={at={(0.5,121,0.97)},anchor=north},
	xlabel=No of threads,
	ylabel=Throughput ops/sec,
    ylabel near ticks]
	\addplot [butter1, fill=butter1] table [x=Threads, y=$Sequential$]{results/gds50501.dat};
	\addplot[chameleon1,fill=chameleon1] table [x=Threads, y=$Coarse$]{results/gds50501.dat};
	\addplot [plum1,fill=plum1] table [x=Threads, y=$Lock-Free$]{results/gds50501.dat};
    \addplot[blue,sharp plot,update limits=false] 
	coordinates {(-15,13917) (90,13917)} 
	node[above] at (axis cs:1,13917) {base-line};
	\end{axis}
	\end{tikzpicture}
        }
        \label{fig:gds-50-50}
    \end{subfigure}
    \begin{subfigure}[b]{0.23\textwidth}
    \captionsetup{font=footnotesize, justification=centering}
    \caption{ Update Intensive}
        \centering
        \resizebox{\linewidth}{!}{
           	\begin{tikzpicture} [scale=0.230]
	\begin{axis}[
    ybar,
    xmin=1,
    xmax=70,
    enlargelimits=0.2,
   xticklabels={1, 10, 20, 30, 40, 50, 60, 70},
   xlabel near ticks,
   xtick=data,
    bar width=5pt,
    legend style={at={(0.5,121,0.97)},anchor=north},
	xlabel=No of threads,
	ylabel=Throughput ops/sec,
    ylabel near ticks]
	\addplot [butter1, fill=butter1] table [x=Threads, y=$Sequential$]{results/gds10901.dat};
	\addplot[chameleon1,fill=chameleon1] table [x=Threads, y=$Coarse$]{results/gds10901.dat};
	\addplot[plum1,fill=plum1] table [x=Threads, y=$Lock-Free$]{results/gds10901.dat};
    \addplot[blue,sharp plot,update limits=false] 
	coordinates {(-15,9147) (90,9147)} 
	node[above] at (axis cs:1,9147) {base-line};
	\end{axis}
	\end{tikzpicture}
        }
        \label{fig:gds-10-90}
    \end{subfigure}

\caption{Concurrent Graph Data-Structure Results without and with \getpath. (a), (b) and (c) are without \getpath and (d), (e) and (f) are with\getpath.} 
\label{fig:getpath}
\end{figure}
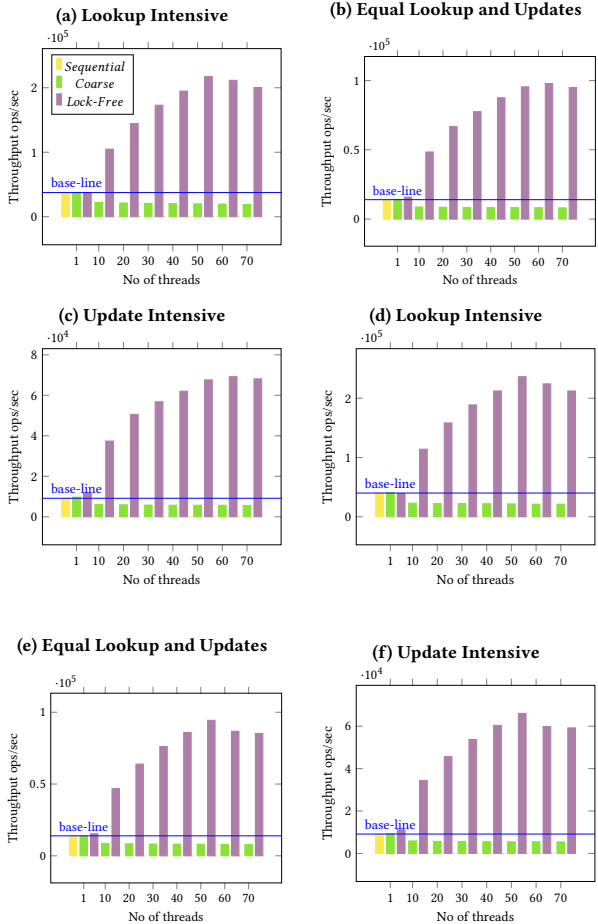

\section{Conclusion}\label{sec:conc}

In this paper, we presented a non-blocking algorithm to implement a dynamic \cgds, which allows threads to concurrently add and remove vertices/edges. The most important contribution of this work is an efficient obstruction-free reachability query in a concurrent graph. We prove the \lbty of the data structure operations. We extensively evaluated a sample C/C++ implementation of the algorithm through a number of micro-benchmarks. The non-blocking algorithm compared to a sequential and a coarse-grained lock-based concurrent version observably achieves up to $5$-$7$x speedup with respect to their throughput.

There are several graph databases that consider dynamic graph operations. But we have not considered them in this work as none of them, to the best of our knowledge, are non-blocking nor do they satisfy \lbty.

\section*{Acknowledgement}

We are grateful to Prasad Jayanti and the anonymous referees for pointing out related work and providing helpful comments. This research was funded by the MediaLab Asia for funding Graduate Scholarship.


\bibliographystyle{ACM-Reference-Format}

\bibliography{biblio} 


\end{document}